\documentclass[acmsmall]{acmart}

\setcopyright{cc}
\setcctype{by}
\acmDOI{10.1145/3808282}
\acmYear{2026}
\acmJournal{PACMPL}
\acmVolume{10}
\acmNumber{PLDI}
\acmArticle{204}
\acmMonth{6}
\acmSubmissionID{pldi26main-p183-p}
\received{2025-11-12}
\received[accepted]{2026-04-03}

\usepackage{paralist}
\usepackage{listings}
\usepackage{enumitem}
\usepackage{xcolor}
\usepackage{subcaption}
\usepackage{tcolorbox}
\usepackage{tikz}
\usetikzlibrary{calc,positioning}
\usepackage{wrapfig}
\usepackage{algorithm}
\usepackage[noend]{algpseudocode}
\usepackage{thmtools}
\usepackage{xspace}

\newcommand{\SA}{\texttt{SA}\xspace}
\newcommand{\LCS}{\texttt{LCS}\xspace}
\newcommand{\FMBR}{\texttt{FMBR}\xspace}
\newcommand{\XThC}{\texttt{X3C}\xspace}

\newcommand{\pds}{\mathcal{S}}
\newcommand{\semiring}{\mathcal{D}}
\newcommand{\semiringSet}{D}
\newcommand{\pdsstates}{S}
\newcommand{\pdsstk}{\Gamma}
\newcommand{\pdstrans}{\Delta}
\newcommand{\pdsmove}{\hookrightarrow}
\newcommand{\zero}{\overline{0}}
\newcommand{\one}{\overline{1}}
\newcommand{\add}{\oplus}
\newcommand{\mult}{\otimes}
\newcommand{\wt}{\text{wt}}
\newcommand{\paths}{\text{Paths}}
\newcommand{\reach}{\text{Reach}}

\newcommand{\PTIME}{\textit{\textbf{P}}\xspace}
\newcommand{\NP}{\textit{\textbf{NP}}\xspace}
\newcommand{\XP}{\textit{\textbf{XP}}\xspace}
\newcommand{\FPT}{\textit{\textbf{FPT}}\xspace}

\newcommand{\func}{\texttt{F}}
\newcommand{\param}{\eta}
\newcommand{\bigO}{\mathcal{O}}

\newcommand{\br}{~\textbf{|}~}
\newcommand{\seq}{\cdot}

\newcommand{\progs}{\text{Prog}}
\newcommand{\reOrd}{\pi}
\newcommand{\linearize}{\text{Linear}}
\newcommand{\brFa}{\text{BrFactor}}
\newcommand{\brID}{\text{BrID}}
\newcommand{\neDe}{\text{Dep}}

\newcommand{\Ebren}{E^\text{en}}
\newcommand{\Ebrex}{E^\text{ex}}
\newcommand{\Eseq}{E^{\text{seq}}}
\newcommand{\brtext}{\textbf{br}}
\newcommand{\lab}{\text{Lab}}
\newcommand{\BG}{\text{BrGraph}}

\newcommand{\score}{\text{Score}}
\newcommand{\reg}{\text{Region}}
\newcommand{\emptysgn}{[]}

\algrenewcommand\algorithmicrequire{\textbf{Input:}}
\algrenewcommand\algorithmicensure{\textbf{Output:}}

\definecolor{azure(web)(azuremist)}{rgb}{0.94, 1.0, 1.0}
\definecolor{ashgrey}{rgb}{0.7, 0.75, 0.71}
\definecolor{almond}{rgb}{0.94, 0.87, 0.8}
\definecolor{bleudefrance}{rgb}{0.19, 0.55, 0.91}
\definecolor{blue(pigment)}{rgb}{0.2, 0.2, 0.6}
\definecolor{bostonuniversityred}{rgb}{0.8, 0.0, 0.0}
\definecolor{burgundy}{rgb}{0.5, 0.0, 0.13}
\definecolor{coolblack}{rgb}{0.0, 0.18, 0.39}
\definecolor{britishracinggreen}{rgb}{0.0, 0.26, 0.15}
\definecolor{airforceblue}{rgb}{0.36, 0.54, 0.66}
\definecolor{azure(colorwheel)}{rgb}{0.0, 0.5, 1.0}
\definecolor{ao(english)}{rgb}{0.0, 0.5, 0.0}
\definecolor{cbblue}{RGB}{0,114,178}   
\definecolor{cbred}{RGB}{214,94,0}
\definecolor{black}{RGB}{0,0,0}
\definecolor{gray}{RGB}{100,100,100}

\lstdefinestyle{pythonstyle}{
  language=Python,
  basicstyle=\ttfamily\normalsize\color{black},
  keywordstyle=\color{britishracinggreen}\bfseries,
  stringstyle=\color{cbblue},
  commentstyle=\color{gray},
  showstringspaces=false, 
  breaklines=true,
  tabsize=2,
  numbers=left,
  numberstyle=\scriptsize\color{black}\ttfamily, 
  numbersep=3pt,                      
  literate={+}{{\bfseries\textcolor{burgundy}{+}}}1
           {-}{{\bfseries\textcolor{burgundy}{-}}}1
           {*}{{\bfseries\textcolor{burgundy}{*}}}1
           {/}{{\bfseries\textcolor{burgundy}{/}}}1
           {=}{{\bfseries\textcolor{burgundy}{=}}}1
           {+=}{{\bfseries\textcolor{burgundy}{+=}}}1
           {-=}{{\bfseries\textcolor{burgundy}{-=}}}1
           {*=}{{\bfseries\textcolor{burgundy}{*=}}}1
           {/=}{{\bfseries\textcolor{burgundy}{/=}}}1,
  mathescape=true
}

\newenvironment{myitemize}{
    \begin{itemize}[noitemsep,topsep=0pt,parsep=0pt,partopsep=0pt,leftmargin=*]
}{
    \end{itemize}
}

\newenvironment{myenumerate}{
    \begin{enumerate}[noitemsep,topsep=0pt,parsep=0pt,partopsep=0pt,leftmargin=*]
}{
    \end{enumerate}
}

\begin{document}

\title{Parameterized Algorithms and Complexity for Function Merging with Branch Reordering}

\author{Amir K. Goharshady}
\orcid{0000-0003-1702-6584}
\affiliation{
  \institution{University of Oxford}
  \country{United Kingdom}
}
\email{amir.goharshady@cs.ox.ac.uk}

\author{Kerim Kochekov}
\orcid{0009-0000-8255-1831}
\affiliation{
  \institution{Hong Kong University of Science and Technology}
  \country{Hong Kong}
}
\email{kkochekov@connect.ust.hk}

\author{Tian Shu}
\orcid{0009-0004-5649-9469}
\affiliation{
  \institution{Hong Kong University of Science and Technology}
  \country{Hong Kong}
}
\email{tshu@connect.ust.hk}

\author{Ahmed Khaled Zaher}
\orcid{0000-0001-5894-7991}
\affiliation{
  \institution{Hong Kong University of Science and Technology}
  \country{Hong Kong}
}
\email{akazaher@connect.ust.hk}
\authornote{Corresponding Author}

\renewcommand{\shortauthors}{Goharshady et al.}

\begin{abstract}
Binary size reduction is an increasingly important optimization objective for compilers, especially in the context of mobile applications and resource-constrained embedded devices. In such domains, binary size often takes precedence over compilation time. One emerging technique that has been shown effective is \emph{function merging}, where multiple similar functions are merged into one, thereby eliminating redundancy. The state-of-the-art approach to perform the merging, due to Rocha et al. [CGO 2019, PLDI 2020], is based on \emph{sequence alignment}, where functions are viewed as \emph{linear} sequences of instructions that are then matched in a way maximizing their alignment.

In this paper, we consider a significantly generalized formulation of the problem by allowing \emph{reordering of branches} within each function, subsequently allowing for more flexible matching and better merging. We show that this makes the problem \textit{\textbf{NP}}-hard, and thus we study it through the lens of \emph{parameterized algorithms and complexity}, where we identify certain parameters of the input that govern its complexity. We look at two natural parameters: the \emph{branching factor} and \emph{nesting depth} of input functions.

Concretely, our input consists of two functions $\texttt{F}_1, \texttt{F}_2,$ where each $\texttt{F}_i$ has size $n_i,$ branching factor $b_i,$ and nesting depth $d_i.$ Our task is to reorder the branches of $\texttt{F}_1$ and $\texttt{F}_2$ in a way that yields linearizations achieving the maximum sequence alignment. Let $n=\max(n_1, n_2),$ and define $b, d$ similarly. Our results are as follows: 
\begin{itemize}[noitemsep,topsep=0pt,parsep=0pt,partopsep=0pt,leftmargin=*]
  \item A simple algorithm running in time $2^{\mathcal{O}(bd)}  n^2,$ establishing that the problem is fixed-parameter tractable~(\textit{\textbf{FPT}}) with respect to all four parameters $b_1,d_1, b_2, d_2.$
  \item An algorithm running in time $2^{\mathcal{O}(bd_2)}  n^7,$ showing that even when one of the functions has an unbounded nesting depth, the problem remains in \textit{\textbf{FPT}}.
  \item A hardness result showing that the problem is \textit{\textbf{NP}}-hard even when constrained to constant $d_1, b_2, d_2.$
\end{itemize}
To the best of our knowledge, this is the first systematic study of function merging with branch reordering from an algorithmic or complexity-theoretic perspective.
\end{abstract}

\begin{CCSXML}
<ccs2012>
<concept>
<concept_id>10011007.10011006.10011041</concept_id>
<concept_desc>Software and its engineering~Compilers</concept_desc>
<concept_significance>500</concept_significance>
</concept>
<concept>
<concept_id>10003752.10003809.10010052.10010053</concept_id>
<concept_desc>Theory of computation~Fixed parameter tractability</concept_desc>
<concept_significance>500</concept_significance>
</concept>
<concept>
<concept_id>10003752.10003809.10010052</concept_id>
<concept_desc>Theory of computation~Parameterized complexity and exact algorithms</concept_desc>
<concept_significance>500</concept_significance>
</concept>
</ccs2012>
\end{CCSXML}

\ccsdesc[500]{Software and its engineering~Compilers}
\ccsdesc[500]{Theory of computation~Fixed parameter tractability}
\ccsdesc[500]{Theory of computation~Parameterized complexity and exact algorithms}

\keywords{Compiler Optimization, Binary Size Reduction, Function Merging, Parameterized Algorithms}

\maketitle


\section{Introduction}\label{sec:introduction}

\paragraph{Binary Size Optimization} Binary size reduction is an increasingly important optimization objective for compilers, and there has been an influx of works in this area from researchers in academia and industry across the compilers and embedded systems communities~\cite{postiz,profguided,rollbin,linkersz,looproll,reordering,mobile,f3m,hybf,newfm,inlinemobile}. Two key applications that drive this interest are mobile applications, e.g.,~iOS and Android apps, as well as resource-constrained embedded devices, e.g.,~Internet-of-Things (IoT) devices. As new features and updates are continuously added to mobile applications, their binary size is often compromised~\cite{mobile}. At the same time, software vendors usually impose a maximum limit on the size of apps that can be downloaded, and users tend to avoid downloading large apps~\cite{mobile,google}. In embedded devices, it is typical to have minimal computational capabilities available in order to reduce production cost~\cite{iot}. In such devices, large code size can occupy a significant portion of storage and lead to increased power consumption~\cite{rollbin}. These reasons necessitate effective size minimization techniques, even at the cost of longer compilation time.

\paragraph{Function Merging} One emerging technique that has been shown effective for binary size reduction is \emph{function merging}~\cite{funcsim,seqalign,ssafuncmerg,hyfm,f3m,multimerge,newfm}. The high-level idea is to repeatedly run the following process: Identify a pair of functions $\func_1,\func_2$ that are highly similar, under some notion of similarity, and introduce a new function $\func_\star$ that captures their combined behaviour. The parts common to both $\func_1$ and $\func_2$ will be included only once in $\func_\star,$ while the differing parts will be captured through an additional parameter given to $\func_\star,$ which determines which $\func_i$ to simulate. Finally, the body of each $\func_i$ is replaced with a single call to $\func_\star$ with the appropriate argument. The more similar $\func_1$ and $\func_2$ are, the more success merging can have in reducing code size. Thus, the usefulness of this optimization relies on the presence of redundancy in the code. Such redundancy can arise, for instance, due to using templates in a high-level language, which is projected into multiple highly-similar copies in the low-level code, to which the optimization is applied. Moreover, interprocedural compiler optimizations can create more redundancy in code which can then be eliminated through merging. See~\cite[Section 1]{funcsim} for further discussion.

\paragraph{Remark} We focus on the case of merging 2 functions for simplicity of exposition, but all of our results can be easily generalized to merging $k\geq 3$ functions.

\paragraph{Running Example} We will use the Python code in Figure~\ref{fig:F1F2seqalign} as a running example throughout the paper. It has two functions $\func_1$ (left) and $\func_2$ (right) that we wish to merge. One way to perform the merging is shown in Figure~\ref{fig:F1F2seqalignmerged} (left), where we merge the shared code portions in $\func_1,\func_2$ highlighted in green. Note that such instructions in $\func_1$ and $\func_2$ are not only equivalent, but also appear in the same order, i.e.,~they are \emph{aligned} across $\func_1$ and $\func_2.$ Other code that only appears in $\func_1$ or $\func_2$ is highlighted in red and blue, respectively, and no merging is performed on them. In the merged function $\func_\star,$ we introduce a new parameter \texttt{b}. Shared code is executed unconditionally, and code appearing only in $\func_1$ or $\func_2$ is executed depending on the value of \texttt{b}. Finally, the bodies of $\func_1$ and $\func_2$ are replaced with calls to the merged function, as shown in Figure~\ref{fig:F1F2seqalignmerged} (right). Suppose for simplicity that code size is measured by the number of lines of code. Then, before merging, the total size of $\func_1$ and $\func_2$ is $29$ lines, while after merging, the total size is $25$ lines, achieving a size reduction of $4$ lines.

\begin{figure}[h]
  \centering
  \hspace{0.03\linewidth}
  \begin{subfigure}{0.46\linewidth}
    \centering
    \scalebox{00.85}{
    \begin{tikzpicture}
      \node[anchor=north west, inner sep=0] (code) at (0, 0) {
\begin{lstlisting}[firstline=1, lastline=15, style=pythonstyle]
def F$_1$(x):
  x += 1
  y = x * 75
  z = -2 * x
  print(y)
  if x == 0:
    print(x)
    for i in range(3):
      print(i)
      print(i**2)
  elif x == 1:
    print(z)
    print(x * z)
  print(x**2)
  print(z**3)
\end{lstlisting}\par
      };

      \pgfmathsetmacro{\lineLen}{0.458}
      \pgfmathsetmacro{\shift}{0.14}
      \pgfmathsetmacro{\decr}{0.07}

      \pgfmathsetmacro{\X}{1}
      \pgfmathsetmacro{\len}{1}

      \draw[ao(english)!100, thick, rounded corners, fill=green!10, fill opacity=0.1] ($(code.north west) + (0,-\shift-\X * \lineLen)$) rectangle ($(code.north east) + (0, -\shift)+ (0, -\X * \lineLen - \len * \lineLen + \decr)$);

      \pgfmathsetmacro{\X}{2}
      \pgfmathsetmacro{\len}{1}

      \draw[bostonuniversityred!100, thick, rounded corners, fill=red!10, fill opacity=0.1] ($(code.north west) + (0,-\shift-\X * \lineLen)$) rectangle ($(code.north east) + (0, -\shift)+ (0, -\X * \lineLen - \len * \lineLen + \decr)$);

      \pgfmathsetmacro{\X}{3}
      \pgfmathsetmacro{\len}{1}

      \draw[ao(english)!100, thick, rounded corners, fill=green!10, fill opacity=0.1] ($(code.north west) + (0,-\shift-\X * \lineLen)$) rectangle ($(code.north east) + (0, -\shift)+ (0, -\X * \lineLen - \len * \lineLen + \decr)$);

      \pgfmathsetmacro{\X}{4}
      \pgfmathsetmacro{\len}{1}

      \draw[bostonuniversityred!100, thick, rounded corners, fill=red!10, fill opacity=0.1] ($(code.north west) + (0,-\shift-\X * \lineLen)$) rectangle ($(code.north east) + (0, -\shift)+ (0, -\X * \lineLen - \len * \lineLen + \decr)$);

      \pgfmathsetmacro{\X}{5}
      \pgfmathsetmacro{\len}{5}

      \draw[ao(english)!100, thick, rounded corners, fill=green!10, fill opacity=0.1] ($(code.north west) + (0,-\shift-\X * \lineLen)$) rectangle ($(code.north east) + (0, -\shift)+ (0, -\X * \lineLen - \len * \lineLen + \decr)$);

      \pgfmathsetmacro{\X}{10}
      \pgfmathsetmacro{\len}{3}

      \draw[bostonuniversityred!100, thick, rounded corners, fill=red!10, fill opacity=0.1] ($(code.north west) + (0,-\shift-\X * \lineLen)$) rectangle ($(code.north east) + (0, -\shift)+ (0, -\X * \lineLen - \len * \lineLen + \decr)$);

      \pgfmathsetmacro{\X}{13}
      \pgfmathsetmacro{\len}{2}

      \draw[ao(english)!100, thick, rounded corners, fill=green!10, fill opacity=0.1] ($(code.north west) + (0,-\shift-\X * \lineLen)$) rectangle ($(code.north east) + (0, -\shift)+ (0, -\X * \lineLen - \len * \lineLen + \decr)$);
    \end{tikzpicture}
    }
  \end{subfigure}
  \hfill
  \begin{subfigure}{0.48\linewidth}
    \centering
    \scalebox{00.85}{
    \begin{tikzpicture}
      \node[anchor=north west, inner sep=0] (code) at (0, 0) {
\begin{lstlisting}[firstline=18, lastline=31, firstnumber=16, style=pythonstyle]
def F$_2$(x):
  x += 1
  z = -2 * x
  print(x + z)
  if x == 1:
    print(z)
    print(x * z)
  elif x == 0:
    print(x)
    for i in range(3):
      print(i)
      print(i**2)
  print(x**2)
  print(z**3)
\end{lstlisting}\par
      };

      \pgfmathsetmacro{\lineLen}{0.458}
      \pgfmathsetmacro{\shift}{0.14}
      \pgfmathsetmacro{\decr}{0.07}

      \pgfmathsetmacro{\X}{1}
      \pgfmathsetmacro{\len}{1}

      \draw[ao(english)!100, thick, rounded corners, fill=green!10, fill opacity=0.1] ($(code.north west) + (0,-\shift-\X * \lineLen)$) rectangle ($(code.north east) + (0, -\shift)+ (0, -\X * \lineLen - \len * \lineLen + \decr)$);

      \pgfmathsetmacro{\X}{2}
      \pgfmathsetmacro{\len}{1}

      \draw[ao(english)!100, thick, rounded corners, fill=green!10, fill opacity=0.1] ($(code.north west) + (0,-\shift-\X * \lineLen)$) rectangle ($(code.north east) + (0, -\shift)+ (0, -\X * \lineLen - \len * \lineLen + \decr)$);

      \pgfmathsetmacro{\X}{3}
      \pgfmathsetmacro{\len}{1}

      \draw[blue(pigment)!100, thick, rounded corners, fill=blue!10, fill opacity=0.1] ($(code.north west) + (0,-\shift-\X * \lineLen)$) rectangle ($(code.north east) + (0, -\shift)+ (0, -\X * \lineLen - \len * \lineLen + \decr)$);

      \pgfmathsetmacro{\X}{4}
      \pgfmathsetmacro{\len}{3}

      \draw[blue(pigment)!100, thick, rounded corners, fill=blue!10, fill opacity=0.1] ($(code.north west) + (0,-\shift-\X * \lineLen)$) rectangle ($(code.north east) + (0, -\shift)+ (0, -\X * \lineLen - \len * \lineLen + \decr)$);

      \pgfmathsetmacro{\X}{7}
      \pgfmathsetmacro{\len}{5}

      \draw[ao(english)!100, thick, rounded corners, fill=green!10, fill opacity=0.1] ($(code.north west) + (0,-\shift-\X * \lineLen)$) rectangle ($(code.north east) + (0, -\shift)+ (0, -\X * \lineLen - \len * \lineLen + \decr)$);

      \pgfmathsetmacro{\X}{12}
      \pgfmathsetmacro{\len}{2}

      \draw[ao(english)!100, thick, rounded corners, fill=green!10, fill opacity=0.1] ($(code.north west) + (0,-\shift-\X * \lineLen)$) rectangle ($(code.north east) + (0, -\shift)+ (0, -\X * \lineLen - \len * \lineLen + \decr)$);
    \end{tikzpicture}
    }
    \vspace{.7em}
  \end{subfigure}
  \caption{Two functions $\func_1$ (left) and $\func_2$ (right) to be merged. The shared code is highlighted in green, whereas code only in $\func_1$ or $\func_2$ is highlighted in red and blue, respectively.}
  \label{fig:F1F2seqalign}
\end{figure}

\begin{figure}[h]
\vspace{-1.5em}
  \centering
  \hspace{0.03\linewidth}
  \begin{subfigure}{0.46\linewidth}
    \centering
    \scalebox{00.85}{
    \begin{tikzpicture}
      \node[anchor=north west, inner sep=0] (code) at (0, 0) {
\begin{lstlisting}[firstline=34, lastline=55, firstnumber=1, style=pythonstyle]
def F$_\star$(x, b):
  x += 1
  if b == 1:
    y = x * 75
  z = -2 * x
  if b == 1:
    print(y)
  else:
    print(x + z)
  if b == 2 and x == 1:
    print(z)
    print(x * z)
  elif x == 0:
    print(x)
    for i in range(3):
      print(i)
      print(i**2)
  elif b == 1 and x == 1:
    print(z)
    print(x * z)
  print(x**2)
  print(z**3)
\end{lstlisting}\par
      };
      \pgfmathsetmacro{\lineLen}{0.458}
      \pgfmathsetmacro{\shift}{0.14}
      \pgfmathsetmacro{\decr}{0.07}

      \pgfmathsetmacro{\X}{1}
      \pgfmathsetmacro{\len}{1}

      \draw[ao(english)!100, thick, rounded corners, fill=green!10, fill opacity=0.1] ($(code.north west) + (0,-\shift-\X * \lineLen)$) rectangle ($(code.north east) + (0, -\shift)+ (0, -\X * \lineLen - \len * \lineLen + \decr)$);

      \pgfmathsetmacro{\X}{2}
      \pgfmathsetmacro{\len}{2}

      \draw[bostonuniversityred!100, thick, rounded corners, fill=red!10, fill opacity=0.1] ($(code.north west) + (0,-\shift-\X * \lineLen)$) rectangle ($(code.north east) + (0, -\shift)+ (0, -\X * \lineLen - \len * \lineLen + \decr)$);

      \pgfmathsetmacro{\X}{4}
      \pgfmathsetmacro{\len}{1}

      \draw[ao(english)!100, thick, rounded corners, fill=green!10, fill opacity=0.1] ($(code.north west) + (0,-\shift-\X * \lineLen)$) rectangle ($(code.north east) + (0, -\shift)+ (0, -\X * \lineLen - \len * \lineLen + \decr)$);

      \pgfmathsetmacro{\X}{5}
      \pgfmathsetmacro{\len}{2}

      \draw[bostonuniversityred!100, thick, rounded corners, fill=red!10, fill opacity=0.1] ($(code.north west) + (0,-\shift-\X * \lineLen)$) rectangle ($(code.north east) + (0, -\shift)+ (0, -\X * \lineLen - \len * \lineLen + \decr)$);

      \pgfmathsetmacro{\X}{7}
      \pgfmathsetmacro{\len}{2}
      
      \draw[blue(pigment)!100, thick, rounded corners, fill=blue!10, fill opacity=0.1] ($(code.north west) + (0,-\shift-\X * \lineLen)$) rectangle ($(code.north east) + (0, -\shift)+ (0, -\X * \lineLen - \len * \lineLen + \decr)$);

      \pgfmathsetmacro{\X}{9}
      \pgfmathsetmacro{\len}{3}
      
      \draw[blue(pigment)!100, thick, rounded corners, fill=blue!10, fill opacity=0.1] ($(code.north west) + (0,-\shift-\X * \lineLen)$) rectangle ($(code.north east) + (0, -\shift)+ (0, -\X * \lineLen - \len * \lineLen + \decr)$);

      \pgfmathsetmacro{\X}{12}
      \pgfmathsetmacro{\len}{5}

      \draw[ao(english)!100, thick, rounded corners, fill=green!10, fill opacity=0.1] ($(code.north west) + (0,-\shift-\X * \lineLen)$) rectangle ($(code.north east) + (0, -\shift)+ (0, -\X * \lineLen - \len * \lineLen + \decr)$);

      \pgfmathsetmacro{\X}{17}
      \pgfmathsetmacro{\len}{3}

      \draw[bostonuniversityred!100, thick, rounded corners, fill=red!10, fill opacity=0.1] ($(code.north west) + (0,-\shift-\X * \lineLen)$) rectangle ($(code.north east) + (0, -\shift)+ (0, -\X * \lineLen - \len * \lineLen + \decr)$);

      \pgfmathsetmacro{\X}{20}
      \pgfmathsetmacro{\len}{2}

      \draw[ao(english)!100, thick, rounded corners, fill=green!10, fill opacity=0.1] ($(code.north west) + (0,-\shift-\X * \lineLen)$) rectangle ($(code.north east) + (0, -\shift)+ (0, -\X * \lineLen - \len * \lineLen + \decr)$);

    \end{tikzpicture}
    }
  \end{subfigure}
  \hfill
  \begin{subfigure}{0.48\linewidth}
    \centering

    \scalebox{00.85}{
    \begin{tikzpicture}
      \node[anchor=north west, inner sep=0] (code) at (0, 0) {
\begin{lstlisting}[firstline=77, lastline=78, firstnumber=23, style=pythonstyle]
def F$_1$(x):
  F$_\star$(x, 1)
\end{lstlisting}\par
      };
    \end{tikzpicture}
    }

    ~

    \scalebox{00.85}{
    \begin{tikzpicture}
      \node[anchor=north west, inner sep=0] (code) at (0, 0) {
\begin{lstlisting}[firstline=79, lastline=80, firstnumber=25, style=pythonstyle]
def F$_2$(x):
  F$_\star$(x, 2)
\end{lstlisting}\par
      };
    \end{tikzpicture}
    }
    \vspace{10em}
  \end{subfigure}
  \caption{Merged function $\func_\star$ (left) and the updated bodies of $\func_1$ and $\func_2$ after the merge (right).}
  \label{fig:F1F2seqalignmerged}
\end{figure}

\paragraph{Pair Selection vs. Merging} We distinguish between the following two tasks in function merging: 
\begin{compactenum}
  \item[\textbf{Pair Selection:}] From $n$ functions $\func_1,\dots,\func_n,$ optimally \emph{select then merge} a pair of functions $\func_{i_1}, \func_{i_2}.$
  \item[\textbf{Merging}:] Optimally merge a \emph{given} pair of functions $\func_1, \func_2.$
\end{compactenum}
As the pair selection problem involves not only merging a pair of functions optimally, but also finding the best pair to merge, it clearly subsumes the merging problem. At the same time, pair selection is too difficult, as it seems to require $\Theta(n^2)$ invocations of the merging routine. For this reason, all related works~\cite{seqalign,ssafuncmerg,f3m,funcsim} take a heuristic approach to pair selection by first identifying a \emph{seemingly promising} pair of functions, and then delegating the extraction of as much profit from that pair to the merging routine. In this work, we only focus on the merging task, and whenever we speak of function merging, we are referring to optimally merging two functions that are readily given as part of the input.

\paragraph{Sequence Alignment} \emph{Sequence alignment (SA)} is a classical problem that has its roots in bioinformatics, and it generalizes other standard problems such as \emph{edit distance} and \emph{longest common subsequence}~\cite{bioinfintro}. Informally, the input consists of two strings/sequences $S_1, S_2$ of $n_1, n_2$ characters, respectively, and our goal is to insert or delete characters from them in order to maximize their character-wise alignment. This optimal alignment has a value denoted by $\SA(S_1, S_2).$ Formal definitions are delegated to~Section~\ref{sec:formulation}, but we note that $\SA(S_1, S_2)$ and its corresponding alignment can be computed in time $\bigO\left(n_1 n_2\right)$ via a standard dynamic programming algorithm~\cite{bioinfintro,clrs}.

\paragraph{Function Merging via Sequence Alignment~\cite{seqalign,ssafuncmerg}} The state-of-the-art approach to perform function merging is based on sequence alignment~\cite{seqalign,ssafuncmerg}. Their insight is to view each instruction as a character, and each function $\func_i$ ($i\in\{1, 2\}$) as a string $S_i$ of such characters. Then, they run an SA algorithm to compute $\SA(S_1, S_2)$ and find the best alignment, which then guides the merging of the two functions. This is precisely the approach used to perform the merging in Figures~\ref{fig:F1F2seqalign} and~\ref{fig:F1F2seqalignmerged}: The SA algorithm identifies the aligned regions in $\func_1$ and $\func_2$ highlighted in green, and moreover guarantees that this is the best possible alignment that, say, maximizes the number of aligned instructions. Given this alignment, a merging procedure merges the aligned instructions (green) together, while the unaligned instructions (red and blue) are kept separate, as explained above.

\paragraph{Control-flow Graphs} A major limitation of the SA-based approach is that it treats the code of a function as a fixed \emph{linear} string of instructions, whereas it can be represented more generally as a \emph{control-flow graph (CFG)}~\cite{cfgs}. A CFG of a function $\func$ is a directed graph $G_\func=(V_\func, E_\func),$ where each node $v\in V_\func$ represents a basic block, which models sequential statements, and an edge $(u, v)\in E_\func$ indicates that the control can flow from basic block $u$ to basic block $v,$ which models branching constructs such as \texttt{if-else} statements and loops.\footnote{We treat a loop as a branching construct with one branch, the loop's body.} Given a CFG $G_\func$ and a \emph{traversal order} of $V_\func,$ a \emph{linearization} procedure produces a function $\func'$ respecting this order~\cite{seqalign}. $\func'$ is equivalent to $\func,$ but may or may not syntactically coincide with it. Throughout this paper, the notion of \emph{equivalence} is used informally to indicate that the different syntactic representations in question (e.g., $\func$ and $\func'$) achieve the same function.

As an example, Figure~\ref{fig:cfgandreorder} (left) shows the CFG $G_{\func_1}$ of the function $\func_1$ in Figure~\ref{fig:F1F2seqalign}. At the end of the basic block at $v_1,$ we are free to choose, in the linearized program, whether to put the branch $\{v_2, v_3\}$ (corresponding to the $\texttt{x == 0}$ case) before or after the branch $\{v_4\}$ (corresponding to the $\texttt{x == 1}$ case). The former choice yields $\func_1,$ whereas the latter yields the function $\func_1'$ shown in Figure~\ref{fig:cfgandreorder}~(right), where we swap the two branches.

\begin{figure}[h]
  \centering
  \begin{subfigure}{0.60\linewidth}
    
    \scalebox{00.85}{
    \begin{tikzpicture}[
        node distance=2cm and 2.5cm,
        every node/.style={draw, rounded corners, align=left, font=\ttfamily\footnotesize},
        edge from parent/.style={draw, thick, ->, >=stealth},
        edge label/.style={midway, fill=azure(web)(azuremist), font=\footnotesize\itshape}
      ]

      \node (v1) [fill=almond!10] {
        $v_1$:\\
\begin{lstlisting}[firstline=86, lastline=89, style=pythonstyle,numbers=none]
x += 1
y = x * 75
z = -2 * x
print(y)
\end{lstlisting}\par
      };
      \node (v2) [below left=of v1,xshift=2cm,yshift=1.35cm, fill=almond!10] {
        $v_2$:\\
\begin{lstlisting}[firstline=121, lastline=122, style=pythonstyle,numbers=none]
print(x)
i = 0
\end{lstlisting}\par
      };
      \node (v3) [below=of v2, xshift=0.4cm, yshift=1.1cm,fill=almond!10] {
        $v_3$:\\
\begin{lstlisting}[firstline=126, lastline=128, style=pythonstyle,numbers=none]
print(i)
print(i**2)
i += 1
\end{lstlisting}\par
      };

      \node (v4) [below right=of v1, xshift=-2.9cm, yshift=-1.2cm, fill=almond!10] {
        $v_4$:\\
\begin{lstlisting}[firstline=96, lastline=97, style=pythonstyle,numbers=none]
print(z)
print(x * z)
\end{lstlisting}\par
      };
      \node (v5) [below=of v3, xshift=2.5cm, yshift=1cm, fill=almond!10] {
        $v_5$:\\
\begin{lstlisting}[firstline=98, lastline=99, style=pythonstyle,numbers=none]
print(x**2)
print(z**3)
\end{lstlisting}\par
        };

      \draw[->] (v1) to[out=200,in=80,looseness=0.9] node[edge label] {\texttt{x == 0}} (v2);
      \draw[->] (v1) to[out=-18,in=82,looseness=0.75] node[edge label] {\texttt{x == 1}} (v4);
      \draw[->] (v2) to[out=-90,in=110,looseness=0.5] (v3);
      \draw[->, loop above, looseness=4, out=120, in=160] (v3) to node[edge label,yshift=-5pt] {\texttt{i < 3}} (v3) ;
      \draw[->] (v3) to[out=-90,in=160,looseness=0.8] node[edge label] {\texttt{i >= 3}} (v5);
      \draw[->] (v4) to[out=-95,in=38,looseness=0.7] (v5);
      \draw[->] (v1) to[out=-75, in=120] node[edge label, pos=0.3] {\texttt{x != 0 and x != 1}} (v5);

    \end{tikzpicture}
    }
  \end{subfigure}
  \hfill
  \begin{subfigure}{0.36\linewidth}
    \centering

    \scalebox{00.85}{
    \begin{tikzpicture}
      \node[anchor=north west, inner sep=0] (code) at (0, 0) {
\begin{lstlisting}[firstline=103, lastline=117, style=pythonstyle]
def F$_1'$(x):
  x += 1
  y = x * 75
  z = -2 * x
  print(y)
  if x == 1:
    print(z)
    print(x * z)
  elif x == 0:
    print(x)
    for i in range(3):
      print(i)
      print(i**2)
  print(x**2)
  print(z**3)
\end{lstlisting}\par
      };
      \pgfmathsetmacro{\lineLen}{0.458}
      \pgfmathsetmacro{\shift}{0.14}
      \pgfmathsetmacro{\decr}{0.07}

      \pgfmathsetmacro{\X}{1}
      \pgfmathsetmacro{\len}{1}

      \draw[ao(english)!100, thick, rounded corners, fill=green!10, fill opacity=0.1] ($(code.north west) + (0,-\shift-\X * \lineLen)$) rectangle ($(code.north east) + (0, -\shift)+ (0, -\X * \lineLen - \len * \lineLen + \decr)$);

      \pgfmathsetmacro{\X}{2}
      \pgfmathsetmacro{\len}{1}

      \draw[bostonuniversityred!100, thick, rounded corners, fill=red!10, fill opacity=0.1] ($(code.north west) + (0,-\shift-\X * \lineLen)$) rectangle ($(code.north east) + (0, -\shift)+ (0, -\X * \lineLen - \len * \lineLen + \decr)$);

      \pgfmathsetmacro{\X}{3}
      \pgfmathsetmacro{\len}{1}

      \draw[ao(english)!100, thick, rounded corners, fill=green!10, fill opacity=0.1] ($(code.north west) + (0,-\shift-\X * \lineLen)$) rectangle ($(code.north east) + (0, -\shift)+ (0, -\X * \lineLen - \len * \lineLen + \decr)$);

      \pgfmathsetmacro{\X}{4}
      \pgfmathsetmacro{\len}{1}

      \draw[bostonuniversityred!100, thick, rounded corners, fill=red!10, fill opacity=0.1] ($(code.north west) + (0,-\shift-\X * \lineLen)$) rectangle ($(code.north east) + (0, -\shift)+ (0, -\X * \lineLen - \len * \lineLen + \decr)$);

      \pgfmathsetmacro{\X}{5}
      \pgfmathsetmacro{\len}{8}

      \draw[ao(english)!100, thick, rounded corners, fill=green!10, fill opacity=0.1] ($(code.north west) + (0,-\shift-\X * \lineLen)$) rectangle ($(code.north east) + (0, -\shift)+ (0, -\X * \lineLen - \len * \lineLen + \decr)$);

      \pgfmathsetmacro{\X}{13}
      \pgfmathsetmacro{\len}{2}

      \draw[ao(english)!100, thick, rounded corners, fill=green!10, fill opacity=0.1] ($(code.north west) + (0,-\shift-\X * \lineLen)$) rectangle ($(code.north east) + (0, -\shift)+ (0, -\X * \lineLen - \len * \lineLen + \decr)$);

    \end{tikzpicture}
    }
    \vspace{4em}
  \end{subfigure}
  \caption{CFG of $\func_1$ from Figure~\ref{fig:F1F2seqalign} (left) and the resulting linearization if we traverse $\{v_2,v_3\}$ before $\{v_4\}$ (right).}
  \label{fig:cfgandreorder}
\end{figure}

\paragraph{Reordering Branches} In this sense, the SA-based approach only works on a single linearization of the input functions, which may be suboptimal. A natural question is whether we can do better by considering \emph{multiple} linearizations of the input functions, and selecting the best one for merging. In particular, we consider \emph{reordering the branches} appearing in each function to obtain equivalent but syntactically different functions, and then performing the SA-based merging on the resulting strings. Slightly more formally, our problem is as follows: Given two functions $\func_1, \func_2,$ we want to reorder the branches of their branching constructs (e.g.,~$\texttt{if}/\texttt{switch}$) to obtain equivalent functions $\func_1', \func_2'$ such that $\SA(\func'_1, \func'_2)$ is maximized. The benefit of doing so can be readily seen in our example: The function $\func_1'$ obtained by swapping the two branches in $\func_1$ has a better alignment with $\func_2$ than $\func_1$ itself.  In particular, the entirety of the \texttt{if-elif} block of $\func_1'$ and $\func_2$ can now be aligned, resulting in the merged function $\func_\star'$ shown in Figure~\ref{fig:F1'F2seqalignmerged}, which is 3 lines smaller than the first merging in $\func_\star.$

\paragraph{Branching Factor and Nesting Depth} The \emph{branching factor} of a function $\func$ is the maximum number of branches in any of its branching constructs, and its \emph{nesting depth} is the maximum number of nested branching constructs, e.g.,~the function $\func_1$ in Figure~\ref{fig:F1F2seqalign} has branching factor of $2,$ since it has an \texttt{if-elif} with two branches, and it has nesting depth of $2$ due to lines 9-10 which are enclosed within a \texttt{for} within an \texttt{if}. As our results suggest, these parameters play a key role in the complexity of the problem, and naturally lend themselves to \emph{parameterized} analysis.

\vspace{-0.6em}

\begin{wrapfigure}{r}{0.33\linewidth}\centering\vspace{-1em}
  \hspace{0.5em}\scalebox{0.85}{
  \begin{tikzpicture}
    \node[anchor=north west, inner sep=0] (code) at (0, 0) {
\begin{lstlisting}[firstline=57, lastline=75, style=pythonstyle]
def F$_\star'$(x, b):
  x += 1
  if b == 1:
    y = x * 75
  z = -2 * x
  if b == 1:
    print(y)
  else:
    print(x + z)
  if x == 1:
    print(z)
    print(x * z)
  elif x == 0:
    print(x)
    for i in range(3):
      print(i)
      print(i**2)
  print(x**2)
  print(z**3)
\end{lstlisting}\par
    };
    \pgfmathsetmacro{\lineLen}{0.458}
    \pgfmathsetmacro{\shift}{0.14}
    \pgfmathsetmacro{\decr}{0.07}

    \pgfmathsetmacro{\X}{1}
    \pgfmathsetmacro{\len}{1}

    \draw[ao(english)!100, thick, rounded corners, fill=green!10, fill opacity=0.1] ($(code.north west) + (0,-\shift-\X * \lineLen)$) rectangle ($(code.north east) + (0, -\shift)+ (0, -\X * \lineLen - \len * \lineLen + \decr)$);

    \pgfmathsetmacro{\X}{2}
    \pgfmathsetmacro{\len}{2}

    \draw[bostonuniversityred!100, thick, rounded corners, fill=red!10, fill opacity=0.1] ($(code.north west) + (0,-\shift-\X * \lineLen)$) rectangle ($(code.north east) + (0, -\shift)+ (0, -\X * \lineLen - \len * \lineLen + \decr)$);

    \pgfmathsetmacro{\X}{4}
    \pgfmathsetmacro{\len}{1}

    \draw[ao(english)!100, thick, rounded corners, fill=green!10, fill opacity=0.1] ($(code.north west) + (0,-\shift-\X * \lineLen)$) rectangle ($(code.north east) + (0, -\shift)+ (0, -\X * \lineLen - \len * \lineLen + \decr)$);

    \pgfmathsetmacro{\X}{5}
    \pgfmathsetmacro{\len}{2}

    \draw[bostonuniversityred!100, thick, rounded corners, fill=red!10, fill opacity=0.1] ($(code.north west) + (0,-\shift-\X * \lineLen)$) rectangle ($(code.north east) + (0, -\shift)+ (0, -\X * \lineLen - \len * \lineLen + \decr)$);

    \pgfmathsetmacro{\X}{7}
    \pgfmathsetmacro{\len}{2}
    
    \draw[blue(pigment)!100, thick, rounded corners, fill=blue!10, fill opacity=0.1] ($(code.north west) + (0,-\shift-\X * \lineLen)$) rectangle ($(code.north east) + (0, -\shift)+ (0, -\X * \lineLen - \len * \lineLen + \decr)$);

    \pgfmathsetmacro{\X}{9}
    \pgfmathsetmacro{\len}{8}

    \draw[ao(english)!100, thick, rounded corners, fill=green!10, fill opacity=0.1] ($(code.north west) + (0,-\shift-\X * \lineLen)$) rectangle ($(code.north east) + (0, -\shift)+ (0, -\X * \lineLen - \len * \lineLen + \decr)$);

    \pgfmathsetmacro{\X}{17}
    \pgfmathsetmacro{\len}{2}

    \draw[ao(english)!100, thick, rounded corners, fill=green!10, fill opacity=0.1] ($(code.north west) + (0,-\shift-\X * \lineLen)$) rectangle ($(code.north east) + (0, -\shift)+ (0, -\X * \lineLen - \len * \lineLen + \decr)$);

  \end{tikzpicture}
  }
  \caption{Merged function $\func_\star'$ obtained via the SA-based approach on $\func_1'$ from Figure \ref{fig:cfgandreorder} and $\func_2$ from Figure~\ref{fig:F1F2seqalign}.}
  \vspace{-0.8em}  
  \label{fig:F1'F2seqalignmerged}
\end{wrapfigure}

\paragraph{Parameterized Algorithms and Complexity~\cite{paramcplxty}} In classical complexity theory, we analyze the complexity of a computational problem $\mathcal{A}$ based on its behavior with respect to the \emph{size} of its inputs. That is, given an input $x$ of size $|x|,$ $\mathcal{A}$ is considered efficiently solvable if it has an algorithm with a low dependence (e.g.,~polynomial) on the input size $|x|.$ \emph{Parameterized complexity theory} generalizes this idea by allowing additional parameters to be associated with an input $x.$ It then explores whether problems that are hard on general inputs can be solved by an algorithm that depends polynomially on $|x|,$ but might have a higher dependence on these extra parameters. We call such an algorithm \emph{fixed-parameter tractable} (\FPT). Formal definitions are given in Section~\ref{sec:formulation}. In addition to its theoretical elegance and depth, parameterized complexity offers a practical workaround for problems that are hard in the classical sense: Suppose that $\mathcal{A}$ is \NP-hard, and thus not in \PTIME~unless \PTIME=\NP. If we can identify certain parameters that are typically small on instances of interest in practice, and if we further design an \FPT~algorithm for $\mathcal{A}$ under these parameters, then we get an efficient practical algorithm for $\mathcal{A}.$ This is despite its \NP-hardness on general inputs, and while still providing guarantees on the optimality and running time of the algorithm. Indeed, various \NP-hard problems admit an \FPT~algorithm under many natural parameters~\cite{paramcplxty,paramalgo}.

\paragraph{Contributions} We consider the problem of function merging with branch reordering defined above. Though the state-of-the-art on function merging~\cite{seqalign,ssafuncmerg} mentions this problem as an open question, it has remained unstudied so far. We formulate the problem rigorously in Section~\ref{sec:formulation}, and then we study it through the lens of parameterized algorithms and complexity. We look at two natural parameters: the \emph{branching factor} and \emph{nesting depth} of input functions. Given two functions $\func_1, \func_2$ to merge, let $n_i, b_i, d_i$ denote the size, branching factor, and nesting depth of $\func_i$ for $i\in\{1, 2\}.$ Let $n=\max(n_1, n_2),$ and define $b, d$ similarly. Our results are as follows:
\begin{myitemize}
  \item A simple algorithm running in time $\bigO\left(2^{b_1 d_1 + b_2 d_2} (b_1+b_2)^2(d_1+d_2)\cdot n_1n_2\right)$~(Theorem~\ref{thm:fpt1}). This shows that the problem is in \FPT~when parameterized by $b_1,d_1, b_2, d_2.$ The algorithm is based on a dynamic programming approach and generalizes the classical SA algorithm.
  \item An algorithm running in time $2^{\bigO(b_2d_2+b_1)}(n_1n_2)^3(n_1+n_2)$ (Theorem~\ref{thm:fpt2}). That is, the problem remains in \FPT~even when one of the functions is of unbounded nesting depth. This algorithm relies on reducing the problem to reachability on a weighted pushdown system~\cite{wpds}.
  \item A compelling hardness result, showing that the problem is \NP-hard even when restricted to $d_1=1,b_2 = d_2 = 0$ (Theorem~\ref{thm:dbdhard}). This implies that the parameterization by $d_1, b_2, d_2$ is not in \FPT, unless \PTIME=\NP.
\end{myitemize}
To the best of our knowledge, this is the first systematic study of function merging with branch reordering from an algorithmic or complexity-theoretic perspective.

\paragraph{Related Works on Function Merging} There have been a handful of recent works on function merging~\cite{funcsim,seqalign,ssafuncmerg,hyfm,f3m,multimerge,newfm}. They usually look at the problem more broadly by additionally considering heuristics to pick which functions to merge. Existing compilers, e.g.,~LLVM, were limited to only merging identical functions, and the first work to go beyond that is~\cite{funcsim}. They merge two functions if their CFGs satisfy a strong notion of isomorphism and the total difference of basic-block code in isomorphic nodes is below a certain threshold. This is a much simpler problem than the general graph isomorphism, and can be solved in polynomial time.  The sequence alignment-based approach~\cite{seqalign} looks at one linearization of each of the two functions and computes an alignment between them to profitably guide the merging. A major drawback of~\cite{seqalign} is that it is not able to handle phi-nodes in SSA-form programs. This technical challenge, which is orthogonal to the theory of merging, was addressed in~\cite{ssafuncmerg}, obtaining better size reductions in practice. The work~\cite{hyfm} proposes further optimizations that achieve a faster memory-efficient compilation without significantly compromising the size reduction. \cite{f3m} proposes a hash-based technique to more effectively identify promising function pairs before merging them. All of these works focused on merging two functions at a time. Merging multiple functions $(\geq 3)$ has been explored in~\cite{multimerge}, which builds on top of~\cite{f3m}. Finally, the work~\cite{newfm} proposed a scalable approach to function merging that composes it with function outlining, and works well within a distributed build environment.



\section{Formulation and Preliminaries}\label{sec:formulation}

In this section, we formally define the problem of function merging with branch reordering (Section~\ref{sec:formulation-1}). Then, we provide some background from parameterized complexity (Section~\ref{sec:formulation-2}) and pushdown systems (Section~\ref{sec:formulation-3}) that will be used in subsequent sections.

\subsection{Function Merging with Branch Reordering}\label{sec:formulation-1}

\paragraph{Abstract Syntax} Let $\Sigma$ be an alphabet containing all instructions of interest, e.g.,~arithmetic operations, memory accesses, etc. We consider programs written in an abstract syntax over the extended alphabet $\hat{\Sigma} := \Sigma\cup \left\{\br, [,],\seq\right\}$ defined by the following grammar:
\begin{equation*}
  \begin{split}
    P \rightarrow \sigma \in \Sigma \;\;\Big|\;\; [P \br \dots\br P] \;\;\Big|\;\;  P \seq P.
  \end{split}
\end{equation*}
That is, a program $P$ can either be a single instruction from $\Sigma,$ a \emph{branching construct} $[P_1\br\dots\br P_k]$ with \emph{branches} $P_1,\dots, P_k,$ each of which is a program, or a \emph{sequential composition} of two programs $P_1\seq P_2.$ We note that the number of branches can be arbitrary, but must be finite. Let $\progs(\Sigma)$ denote the set of all programs derived from this grammar. To illustrate, consider the function $\func_1$ from Figure~\ref{fig:F1F2seqalign}. We denote the instruction on line $i$ by $\texttt{i}.$ Then, its representation in our abstract syntax is:
\begin{equation}\label{eqn:prog}
  P_{\texttt{\func}_1} := \texttt{2}\seq\texttt{3}\seq\texttt{4}\seq\texttt{5}\seq[\texttt{7}\seq [\texttt{9}\seq\texttt{10}] \br \texttt{12}\seq\texttt{13}]\seq\texttt{14}\seq\texttt{15}.
\end{equation}

\paragraph{Remark} Though minimal, this syntax is expressive enough to capture the features relevant to our task of reordering branches for optimal merging. Constructs omitted from the syntax are either unaffected by reordering (e.g.,~the function's name and parameters) or reordering has a negligible effect on them (e.g.,~conditions of branching), so they can be left as an implementation detail.

\paragraph{Size, Branching Factor, and Nesting Depth} The \emph{size} of a program $P\in\progs(\Sigma),$ denoted by $|P|,$ is the number of its characters. Its \emph{branching factor} $\brFa(P)$ is the maximum $k$ such that $P$ contains a subexpression of the form $[P_1 \br \dots \br P_k].$ Finally, the \emph{nesting depth}, or simply the \emph{depth} of $P,$ is denoted by $\neDe(P)$ and defined as the maximum number of nested branching constructs in $P,$ e.g.,~$\neDe(\sigma)=0,\neDe([\sigma\br[\sigma]])=2$ where $\sigma\in\Sigma.$ For $P_{\texttt{\func}_1}$ from Equation~(\ref{eqn:prog}), we have $\brFa(P_{\texttt{\func}_1})=\neDe(P_{\texttt{\func}_1})=2,$ which aligns with its original representation in Figure~\ref{fig:F1F2seqalign}~(left).

\paragraph{Reorderings} Given a program $P\in\progs(\Sigma),$ a \emph{reordering} of it is a function $\reOrd$ defined over the subexpressions of $P$ that form a branching construct $Q = [Q_1\br \dots \br Q_k].$ $\reOrd(Q)$ gives an order over the branches of $Q,$ i.e.,~$\reOrd(Q)$ is a permutation $(i_1,\dots, i_k)$ of the set $\{1,\dots, k\}.$ A reordering $\reOrd$ induces a \emph{reordered program} $P^{\reOrd}\in\progs(\Sigma)$ by replacing $[Q_1\br \dots \br Q_k]$ with $[Q_{i_1}\br \dots \br Q_{i_k}]$ for all branching subexpressions $Q.$ The \emph{identity reordering} $\pi_{\text{I}}$ is a reordering that always returns the identity permutation. For example, $P_{\texttt{\func}_1}$ has only two possible reorderings: the identity reordering $\pi_{\text{I}}$ which induces $P_{\texttt{\func}_1}^{\pi_{\text{I}}}=P_{\texttt{\func}_1},$ and the reordering $\pi_2$ that swaps the two subexpressions $\texttt{7}\seq [\texttt{9}\seq\texttt{10}]$ and $\texttt{12}\seq\texttt{13},$ inducing the reordered program $P_{\texttt{\func}_1}^{\pi_2} := \texttt{2}\seq\texttt{3}\seq\texttt{4}\seq\texttt{5}\seq[\texttt{12}\seq\texttt{13} \br \texttt{7}\seq [\texttt{9}\seq\texttt{10}]]\seq\texttt{14}\seq\texttt{15}.$

\paragraph{Linearizations}

Given a reordering $\reOrd$ of $P,$ the \emph{linearization of $P$ under $\reOrd$} is a string $\linearize(P, \reOrd)\in\Sigma^*$\footnote{Recall that $\Sigma^*$ denotes the set of all finite-length strings over alphabet $\Sigma.$} obtained by removing non-$\Sigma$ characters from the reordered program $P^\reOrd.$ It is clear that $|\linearize(P,\reOrd)|\leq |P^\reOrd|=|P|.$ For instance, the linearization of $P_{\texttt{\func}_1}$ under $\pi_2$ is:
\begin{equation}\label{eqn:linearization2}
  \begin{split}
     \linearize(P_{\texttt{\func}_1},\pi_2) = \texttt{2}\;\;\texttt{3}\;\;\texttt{4}\;\;\texttt{5}\;\;\texttt{12}\;\;\texttt{13}\;\;\texttt{7}\;\;\texttt{9}\;\;\texttt{10}\;\;\texttt{14}\;\;\texttt{15}.
  \end{split}
\end{equation}
As argued in Section~\ref{sec:introduction}, a linearization preserves the meaning of $P$ but is stricter in the sense that it fixes the order of its branches according to $\reOrd$ without the possibility of further reordering.

\paragraph{Sequence Alignment and Longest Common Subsequence} Consider two strings $S_1, S_2\in\Sigma^*$ of respective lengths $n_1, n_2,$ and a \emph{scoring function} $\delta : (\Sigma\cup \{-\}) \times (\Sigma \cup \{-\})\rightarrow \mathbb{R}$ that can be evaluated in $\bigO(1)$ time.\footnote{It is more standard that the scoring function is given as a $(|\Sigma|+1) \times (|\Sigma|+1)$ matrix~\cite{bioinfintro}, but since the number of possible instructions can be infinite, we define the scoring as a function and assume oracle access to it.} `$-$' is a special gap character not in $\Sigma.$  An \emph{alignment} of $S_1, S_2$ is a $2\times m$ matrix $M$ where:
\begin{myitemize}
  \item Every row contains an element from $\Sigma\cup \{-\};$
  \item For $i\in\{1, 2\},$ removing all `$-$' characters from the $i$'th row yields $S_i;$ and
  \item No column contains two `$-$' characters.
\end{myitemize}
Note that we necessarily have $\max(n_1, n_2)\leq m\leq n_1 + n_2.$ The \emph{score} of an alignment $M$ under $\delta$ is defined as $\score_\delta(M)=\sum_{j=1}^m \delta(M[1, j], M[2, j]),$ where $M[i, j]$ is the element in the $i$'th row and $j$'th column of $M.$ In the sequence alignment problem, the goal is to compute: $$\SA(S_1, S_2, \delta) := \max_{M}\; \score_\delta(M),$$ where $M$ ranges over all alignments of $S_1, S_2.$ A well-known special case of sequence alignment is the \emph{longest common subsequence} (LCS) problem,\footnote{Recall that for a sequence $a_1, \dots, a_n,$ a \emph{subsequence} of it is any sequence of the form $a_{i_1},\dots a_{i_k}$ where $k\geq 0$ and $i_j<i_{j+1}$ for all $j$ s.t. $1\leq j<k.$} where $\delta$ is defined as $\delta_{\LCS}(a, a) = 1$ for all $a\in\Sigma,$ and $\delta_{\LCS}(a, b) = \delta_{\LCS}(a, -) = \delta_{\LCS}(-, a) = 0$ for all $a, b\in\Sigma$ with $a\neq b.$ In this case, the optimal score, denoted by $\LCS(S_1,S_2) := \SA(S_1, S_2,\delta_{\LCS}),$ is simply the length of the longest common subsequence between $S_1$ and $S_2.$ Both $\SA$ and $\LCS$ can be computed using dynamic programming in time $\bigO(n_1 n_2)$~\cite{bioinfintro,clrs}.

For instance, let $S_1 := \linearize(P_{\texttt{\func}_1},\pi_2)$ (from Equation~(\ref{eqn:linearization2})), and let $S_2:=\linearize(P_{\texttt{\func}_2},\pi_{\text{I}}),$ i.e.,~the result of abstracting $\func_2$ (Figure~\ref{fig:F1F2seqalign} (right)) then linearizing it without reordering. We get the strings:
\begin{equation*}
  \begin{split}
  S_1 = \texttt{2}\;\;\texttt{3}\;\;\texttt{4}\;\;\texttt{5}\;\;\texttt{12}\;\;\texttt{13}\;\;\texttt{7}\;\;\texttt{9}\;\;\texttt{10}\;\;\texttt{14}\;\;\texttt{15},\quad\quad
  S_2 = \texttt{17} \;\; \texttt{18} \;\; \texttt{19}\;\; \texttt{21}\;\; \texttt{22}\;\; \texttt{24}\;\; \texttt{26}\;\; \texttt{27}\;\; \texttt{28}\;\; \texttt{29}.
  \end{split}
\end{equation*}
By replacing line numbers with symbols that identify equivalent instructions, we get:
\begin{equation*}
  \begin{split}
  S_1 = a\;\;b\;\;c\;\;d\;\;h\;\;i\;\;e\;\;f\;\;g\;\;j\;\;k,\quad\quad
  S_2 = a\;\;c\;\;l\;\;h\;\;i\;\;e\;\;f\;\;g\;\;j\;\;k.
  \end{split}
\end{equation*}
The following is an optimal alignment with a score of $9$ under the LCS scoring function $\delta_{\LCS}$:
\begin{equation*}
  \begin{array}{cccccccccccc}
    a & b & c & d & - & h & i & e & f & g & j & k, \\
    a & - & c & - & l & h & i & e & f & g & j & k.
  \end{array}
\end{equation*}
This alignment corresponds to the merged function $\func'_\star$ in Figure~\ref{fig:F1'F2seqalignmerged}. The value $\SA(S_1, S_2,\delta_{\LCS})=\LCS(S_1, S_2)=9$ is precisely the number of matched instructions between $\func'_1$ (i.e.,~$\func_1$ after reordering) and $\func_2,$ excluding the instructions that were abstracted away.

\paragraph{Problem Statement} We are now ready to formally define the problem of function merging with branch reordering. Given two functions with abstract programs $P_1, P_2\in\progs(\Sigma),$ and a scoring function $\delta,$ the goal is to compute:
$$
  \FMBR(P_1, P_2,\delta) := \max_{\reOrd_1, \reOrd_2}\; \SA(\linearize({P_1},{\reOrd_1}),\linearize({P_2},{\reOrd_2}),\delta),
$$
where $\reOrd_1, \reOrd_2$ range over all possible reorderings of $P_1$ and $P_2,$ respectively. For consistency with the complexity-theoretic notions given in Section~\ref{sec:formulation-2} and used in later sections, we also consider the decision variant of \FMBR. There, the input includes an additional integer threshold $\tau,$ and the goal is to decide whether the optimal value $\FMBR(P_1, P_2,\delta)$ is at least $\tau.$ Throughout the remainder of the paper, we let $n_i := |P_i|,$ $b_i := \brFa(P_i),$ and $d_i := \neDe(P_i)$ for $i\in\{1, 2\}.$

\subsection{Background from Parameterized Complexity}\label{sec:formulation-2}

\paragraph{Parameterized Problems} Our definitions follow the standard reference~\cite{paramalgo}. We fix $\Phi$ as our finite alphabet. In classical complexity theory, a problem is defined as a \emph{language} $L\subseteq \Phi^*.$ Then, an algorithm for this problem decides, on a given input $x\in\Phi^*,$ whether $x\in L.$ In parameterized complexity theory, inputs are augmented with one or more \emph{parameters} that capture certain aspects of the input. More formally, a \emph{parameterized problem with $k$ parameters} is a language $L\subseteq \Phi^* \times \mathbb{N}^k.$ An input instance of $L$ is a tuple $\langle x, \param_1,\dots,\param_k\rangle$ where $x\in\Phi^*$ is the main part of the input and $\langle\param_1,\dots, \param_k\rangle\in\mathbb{N}^k$ are the \emph{parameters} associated with $x.$

The standard notion of efficiency in classical complexity is taken to be the class \PTIME, which encompasses all problems with an algorithm running in polynomial time, i.e.,~$\bigO\left(|x|^c\right)$ for some constant $c.$ In the parameterized world, two other complexity classes fulfill this role: \XP~and \FPT.

\paragraph{\XP and \FPT} Let $L$ be a parameterized problem with $k$ parameters. We say that $L\in\XP$ if there is an algorithm that, given an input $\langle x,\param_1,\dots, \param_k\rangle,$ decides whether $\langle x,\param_1,\dots, \param_k \rangle\in L$ in time $\bigO\left(|x|^{f(\param_1,\dots, \param_k)}\right)$ for some computable function $f:\mathbb{N}^k\rightarrow \mathbb{N}.$ We also call such an algorithm an \XP~algorithm. This algorithm is polynomial-time for every fixed value of the parameters, but the degree of the polynomial may depend on the parameters' value. \FPT, short for \emph{fixed-parameter tractable,} is a stricter class than \XP~which requires polynomial input-size dependence regardless of the parameter. We say that $L\in\FPT$ if there is an algorithm that decides whether $\langle x, \param_1,\dots, \param_k\rangle\in L$ in time $\bigO\left(f(\param_1,\dots, \param_k)\cdot |x|^{c}\right)$ for some constant $c$ and a computable function $f:\mathbb{N}^k\rightarrow \mathbb{N}.$ Such an algorithm is also called an \FPT~algorithm. It is immediate that $\FPT\subseteq\XP.$

\paragraph{Parameters of Interest} In our problem $\FMBR,$ the corresponding (unparameterized) language is simply the set $L_{\FMBR} := \{(P_1, P_2,\delta,\tau)\;|\; \FMBR(P_1, P_2,\delta) \geq \tau\}.$ Here, we assume standard binary encoding of the input where programs are represented as a sequence of characters and the scoring function is given as a constant-sized Turing machine. Given an input $x = (P_1, P_2,\delta,\tau),$ we are interested in parameterized variants of $L_{\FMBR}$ obtained by associating with $x$ a non-empty subset of the four parameters $b_1,b_2, d_1,d_2.$ For each parameterization, we can ask whether it is in \FPT~or \XP. For instance, if we choose $b_1,d_2,$ we get the parameterized problem:
$$
  L_{\FMBR}^{b_1,d_2} := \left\{\big\langle (P_1, P_2,\delta,\tau), b_1,d_2\big\rangle\;\big|\; \FMBR(P_1, P_2,\delta) \geq \tau\land b_1=\brFa(P_1)\land d_2=\neDe(P_2)\right\},
$$
which we simply refer to as $\FMBR$ \emph{parameterized by} $b_1$ and $d_2.$ Finally, note that if a parameterized problem with $k$ parameters $\param_1,\dots,\param_k$ is in \FPT~(\XP), then the same problem parameterized by any superset of these parameters is also in \FPT~(\XP).
\subsection{Background from Pushdown Systems}\label{sec:formulation-3}

In this subsection, we provide some relevant background on weighted pushdown systems and reachability over them. Our presentation mostly follows the seminal work of~\cite{wpds}.

\paragraph{Idempotent Semirings} An \emph{idempotent semiring} $\semiring$ is a tuple $\semiring = (\semiringSet, \add, \mult, \zero, \one)$ where $\semiringSet$ is a set, $\zero$ and $\one$ are elements of $\semiringSet,$ and $\add,\mult : \semiringSet^2\rightarrow \semiringSet$ are binary operations over $\semiringSet$ satisfying:
\begin{myenumerate}
  \item $\add$ is commutative, associative, and idempotent, i.e.,~$a\add a = a$ for all $a\in \semiringSet;$
  \item $\mult$ is associative and distributes over $\add;$
  \item $\zero$ is the neutral element for $\add$ and an annihilator for $\mult,$ i.e.,~$a\mult \zero = \zero \mult a = \zero$ for all $a\in \semiringSet;$
  \item $\one$ is the neutral element for $\mult.$
\end{myenumerate}

\paragraph{Weighted Pushdown Systems} A \emph{weighted pushdown system (WPDS) $\pds$ over semiring $\semiring$} is a tuple $\pds = (\pdsstates, \pdsstk, \pdstrans,\wt)$ where:
\begin{myenumerate}
  \item $\pdsstates$ is a finite set of \emph{control locations};
  \item $\pdsstk$ is a finite set called the \emph{stack alphabet};
  \item $\pdstrans\subseteq (\pdsstates\times \pdsstk) \times (\pdsstates\times \pdsstk^*)$ is a finite set of \emph{transitions}, each of which is denoted by $(s, \gamma)\pdsmove(s',w)$ where $s,s'\in\pdsstates,\gamma\in\pdsstk,w\in\pdsstk^*$; and
  \item $\wt : \pdstrans\rightarrow \semiringSet$ is a \emph{weight function} that assigns a weight from $\semiringSet$ to each transition.
\end{myenumerate}
A \emph{configuration} is a pair $\langle s, w\rangle$ where $s\in\pdsstates,w\in\pdsstk^*.$ We call $w$ the \emph{stack} of the configuration, whose first element refers to the top of the stack. The WPDS $\pds$ induces a \emph{configuration graph} $G_\pds$ where:
\begin{myitemize}
  \item The set of nodes is the (infinite) set of all configurations of $\pds;$ and
  \item For every transition $t:(s, \gamma)\pdsmove(s', w)\in\pdstrans$ in $\pds$ and every $w'\in\pdsstk^*,$ we add to $G_\pds$ the edge $e:\left(\langle s, \gamma w'\rangle,\langle s', w w'\rangle\right)$ and we further assign to $e$ a weight $\wt(e):=\wt(t).$ 
\end{myitemize} 

\paragraph{Reachability on WPDSs} Given a WPDS $\pds$ over $\semiring,$ we can now ask reachability questions over its configuration graph $G_\pds.$ Given a (possibly infinite) path of configurations $\rho=c_0,c_1,c_2\dots$ in $G_\pds,$ we define its weight as $\wt(\rho)=\wt(c_0, c_1)\mult\wt(c_1,c_2)\mult\dots,$ where $\wt(\rho)=\one$ if $\rho$ is empty. Given a source configuration $c$ and a target configuration $c',$ we define $\paths(c, c')$ as the set of all paths in $G_\pds$ starting from $c$ and ending at $c'.$ Given $c, c',$ we are interested in computing: 
$$
  \reach(c,c') := \bigoplus_{\rho\in\paths(c, c')} \wt(\rho),
$$ 
where $\oplus(\emptyset)=\zero.$ The work~\cite{wpds} gives an algorithm for computing $\reach(c,c'),$ which we will use in Section~\ref{sec:algorithms-2} to solve $\FMBR.$ 


\section{Algorithms}\label{sec:algorithms}

In this section, we present two \FPT~algorithms for \FMBR. The first one, presented in Section~\ref{sec:algorithms-1}, is for the parameterization by all four parameters $b_1, d_1, b_2, d_2.$ It is based on dynamic programming, and when instantiated with straight-line programs, it coincides with the classical algorithm for \SA/\LCS. Next, in Section~\ref{sec:algorithms-2}, we give an algorithm that allows one of the programs to have unbounded depth, showing that the parameterization by $b_1, b_2, d_i,$ for $i\in\{1, 2\},$ is in \FPT. The main insight behind it is that we can reduce \FMBR~to reachability on a compact WPDS. Combining this with the algorithm of~\cite{wpds}, we obtain an \FPT~algorithm.

\paragraph{Remark} Throughout this section, we use three pseudocodes (Algorithms~\ref{alg:enum-eulerian}-\ref{alg:signature-ops}). We present them modularly to simplify development and avoid repetition, which may make an algorithm seem more complicated than necessary when viewed in isolation.

\subsection{Dynamic Programming Algorithm}\label{sec:algorithms-1}

\paragraph{Source of Hardness} We give an algorithm for \FMBR~parameterized by $b_1, d_1, b_2, d_2.$ First, we give an intuition for the source of hardness in this parameterization. Observe that if our input programs only have sequential composition, then the problem reduces to \SA, which is solvable in $\bigO(n^2)$ time~\cite{bioinfintro,clrs}. On the other hand, if we only have branching, then we readily have an \FPT algorithm: Each of the programs $P_i$ is a branching construct with at most $b_i$ branches, each of which is also a branching-only program, and this recursion continues until depth at most $d_i.$ Thus, we have $\bigO(b_i^{d_i})$ branching constructs in $P_i,$ and we can enumerate all $b_i!^{\bigO(b_i^{d_i})}$ reorderings of each program, and then apply \SA~on each pair of resulting linearizations. This gives us a running time of $b!^{\bigO(b^d)} n^2,$ which is \FPT~with respect to $b_1, d_1, b_2, d_2.$ The hardness comes from the simultaneous presence of both sequencing and branching: If we have $n$ sequential compositions of a program with branching factor $b$ and depth $d,$ then we have $b!^{\bigO(b^d\cdot n)}$ linearizations, which is not in \FPT. We tackle this combinatorial explosion of linearizations through dynamic programming, as described next.

\paragraph{Overview}  For ease of understanding, we present our solution in four steps as follows:
\begin{myenumerate}
  \item We first define a convenient graph representation for a program $P$ in $\progs(\Sigma),$ which we call the \emph{branching graph} $\BG(P).$ For a reordering $\pi$ of $P,$ its linearization $\linearize(P,\pi)$ corresponds to an \emph{Eulerian path} in $\BG(P),$ i.e.,~a path visiting each edge exactly once.
  \item We give a recursive exponential-time algorithm to enumerate all Eulerian paths in $\BG(P),$ thereby capturing all possible linearizations of $P$ induced by reordering its branches.
  \item We show how two simultaneous runs of the above algorithm on branching graphs $\BG(P_1)$ and $\BG(P_2)$ can be combined with the ideas from the classical \SA/\LCS~algorithm to give us an exponential-time algorithm for \FMBR.
  \item Finally, and this is the crux of the algorithm, we carefully analyze the state space and compactly represent it using only $f(b_1, d_1, b_2, d_2)\cdot n_1n_2$ states, yielding an \FPT~algorithm.
\end{myenumerate}

\paragraph{Branching Graphs} As noted above, we define a convenient graph representation for programs, which will make it easier to illustrate our algorithmic ideas. An actual implementation can skip this representation and work directly at the program level, avoiding any overhead associated with generating the graph.  Given a program $P\in \progs(\Sigma),$ its \emph{branching graph} $\BG(P)$ is a directed graph which (i)~has three types of edges: sequential edges, branch-entry edges, and branch-exit edges, (ii)~has two designated nodes which we call the \emph{left} and \emph{right} nodes, and (iii)~each node in $\BG(P)$ has three attributes: a label holding the instruction associated with it, if any, a value denoting its nesting depth, and a \emph{branch ID} which is an identifier for the branch containing the node. Formally, $\BG(P)$ is a tuple $\BG(P) = (V, \Eseq,\Ebren,\Ebrex, l, r,\lab,\neDe,\brID)$ where:
\begin{myenumerate}
  \item $V$ is the set of nodes, which roughly correspond to nodes of the parse tree of $P.$
  \item $\Eseq,\Ebren,\Ebrex \subseteq V\times V$ are the set of sequential, branch-entry, and branch-exit edges, respectively. $\Eseq$ encodes moving from one node to the next using sequential composition, $\Ebren$ encodes moving from a branching construct to the left node of one of its branches, and $\Ebrex$ encodes moving from the right node of a branch back to the enclosing branching construct.
  \item $l, r\in V$ are respectively the left and right nodes, denoting the start and end of $P.$
  \item $\lab : V\rightarrow \Sigma\cup\{\brtext\}$ is the labeling function, which either tells us the instruction associated with a node, or indicates that it corresponds to a branching construct.
  \item $\neDe:V \rightarrow \mathbb{N}$ is the nesting depth for nodes.
  \item $\brID: V\rightarrow V$ gives the ID of the branch that a node belongs to, where a branch is identified by its unique leftmost node.
\end{myenumerate}

We construct the branching graph $\BG(P)$ inductively as follows:

\noindent (1)~\textbf{If $P = \sigma\in\Sigma$:} Let $u$ be a fresh node. We set $V=\{u\},\Eseq=\Ebren=\Ebrex=\emptyset,l = r = u,$ and $\lab(u) = \sigma,\neDe(u) = 0,\brID(u) = u.$

\noindent (2)~\textbf{If $P = [P_1 \br \dots \br P_k]$:} Let $\BG(P_i) = (V_i, \Eseq_i, \Ebren_i, \Ebrex_i, l_i, r_i,\lab_i,\neDe_i,\brID_i)$ for $1\leq i\leq k.$ Let $u$ be a fresh node and define:
\begin{myitemize}
  \item $V=\bigcup_{i=1}^k V_i\cup\{u\},$
  \item $\Eseq=\bigcup_{i=1}^k \Eseq_i, \Ebren=\bigcup_{i=1}^k \Ebren_i \cup \{(u, l_i)~|~ 1\leq i\leq k\}, \Ebrex=\bigcup_{i=1}^k \Ebrex_i \cup \{(r_i, u)~|~ 1\leq i\leq k\},$
  \item $l = r = u,$
  \item $\lab(u) = \brtext$ and $\lab(v) = \lab_i(v)$ for $i\in\{1,\dots, k\}$ and $v\in V_i,$
  \item $\neDe(u) = 0$ and $\neDe(v) = \neDe_i(v) + 1$ for $i\in\{1,\dots, k\}$ and $v\in V_i,$
  \item $\brID(u) = u$ and $\brID(v) = \brID_i(v)$ for $i\in\{1,\dots, k\}$ and $v\in V_i.$
\end{myitemize}

\noindent (3)~\textbf{If $P = P_1\seq P_2$:} Let $\BG(P_i) = (V_i, \Eseq_i, \Ebren_i, \Ebrex_i,l_i, r_i,\lab_i, \neDe_i,\brID_i)$ for $i\in\{1, 2\}.$ Define:
\begin{myitemize}
  \item $V = V_1\cup V_2,\Eseq = \Eseq_1\cup \Eseq_2\cup \{(r_1, l_2)\}, \Ebren = \Ebren_1\cup\Ebren_2, \Ebrex = \Ebrex_1\cup\Ebrex_2,l = l_1, r = r_2,$ and
  \item $\lab = \lab_1\cup\lab_2,\neDe = \neDe_1\cup\neDe_2,\brID(v) = \begin{cases}
      \brID_1(v) & \quad \text{if }v\in V_1,                       \\
      l_1        & \quad \text{if } v\in V_2\land\brID_2(v) = l_2, \\
      \brID_2(v) & \quad \text{otherwise.}
    \end{cases}$
\end{myitemize}
Figure~\ref{fig:branching-graph} illustrates the main aspects of the construction. Sequential edges are shown in solid, branch-entry edges in dashed, and branch-exit edges in dotted lines. When relevant, the label of the node is shown next to it. Note that $\BG(P)$ has at most $2|P|$ nodes and edges, and that $\neDe(P) = \max_{u\in V}\neDe(u),\brFa(P) = \max_{u\in V}\big|\{v\;|\; (u, v)\in\Ebren\}\big|.$ As a concrete example, consider again the program $P_{\texttt{\func}_1}$ from Equation~(\ref{eqn:prog}). Its branching graph $\BG(P_{\texttt{\func}_1})$ is given in Figure~\ref{fig:branching-graph-ex}.

\begin{figure}[h]
  \centering
  \includegraphics[scale=.975]{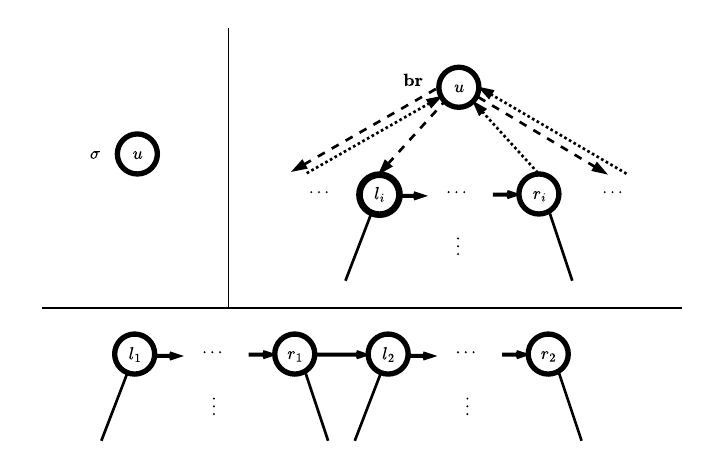}
  \caption{Illustration of the inductive construction of branching graphs. Cases~(1),~(2), and (3) are represented on the top left, top right, and bottom, respectively.}
  \label{fig:branching-graph}
\end{figure}

\begin{figure}[h]
  \centering
  \includegraphics[scale=.975]{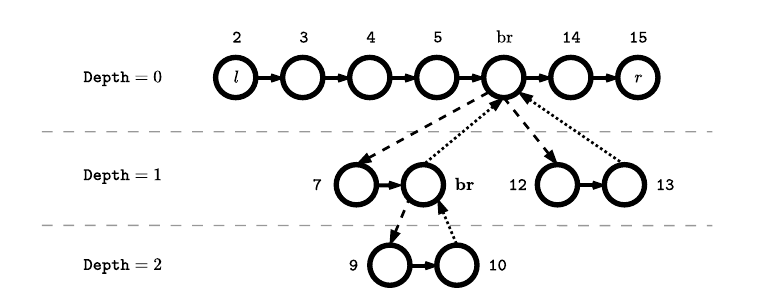}
  \caption{The branching graph corresponding to the program $P_{\texttt{\func}_1}$ in Equation~(\ref{eqn:prog}).}
  \label{fig:branching-graph-ex}
\end{figure}

\paragraph{Linearizations as Eulerian Paths} The intuition behind branching graphs is that, for a linearization $\linearize(P,\pi)$ of $P$ under a reordering $\pi,$ it has a corresponding \emph{Eulerian path} in $\BG(P),$ meaning that it visits every edge of $\BG(P)$ exactly once. Moreover, an Eulerian path in $\BG(P)$ can be uniquely mapped back to a linearization, creating a one-to-one correspondence between linearizations of $P$ and Eulerian paths in $\BG(P).$ 

Given $\linearize(P,\pi),$ we construct the corresponding Eulerian path in $\BG(P)$ as follows. The path starts from the left node $l$ and is extended incrementally. If we are at a node $u$ with label $\sigma\in\Sigma,$ we follow the unique edge outgoing from $u,$ which must be either a sequential or a branch-exit edge. Otherwise, we are at a node $u$ with label $\brtext$ and we have multiple ways of extending the path. Here, we observe that since $\pi$ is a function over branching constructs in $P,$ we can equivalently view it as a function over the $\brtext$-labelled nodes of $\BG(P).$ Given that, we extend the path by visiting the branches of $u$ in the order given by $\pi(u),$ where we initiate the traversal of a branch $P_i$ of $u$ by taking the branch-entry edge $(u, l_i)$ to the left node $l_i$ of $P_i.$ After visiting all branches, we follow the unique edge in $\Eseq\cup\Ebrex$ emitting from $u.$ It is easy to see that this traversal will visit every edge in $\BG(P)$ exactly once and will eventually end at the right node $r,$ which makes it indeed an Eulerian path. It is also easy to see that given an Eulerian path, we can uniquely identify a corresponding linearization of $P$ by simply reading the labels of the $\Sigma$-labelled nodes in the order they are visited along the path. Note that any Eulerian path necessarily starts at $l$ and ends at $r.$

\paragraph{Enumerating Eulerian Paths} Using the insight above, to capture all possible linearizations, it suffices to consider all possible Eulerian paths from $l$ to $r$ in $\BG(P)$ by simply exhausting all possible orders of branches at the $\brtext$-labelled nodes. This is shown in Algorithm~\ref{alg:enum-eulerian}.

\begin{algorithm}[h]
  \caption{Enumerating all linearizations}\label{alg:enum-eulerian}
  \begin{algorithmic}[1]
    \Require A program $P\in\progs(\Sigma).$
    \State Let $\BG(P) = (V, \Eseq, \Ebren, \Ebrex, l, r,\lab, \neDe, \brID).$
    \State \Return \Call{Explore}{$l$, $\emptyset$, $\epsilon$}.

    \State \textbf{//} $u$: currently traversed node; $S$: accumulated linearization; $X$: set of visited nodes.
    \Function{Explore}{$u, X, S$}
    \If{$u = \bot$}\Comment{$\bot$ indicates that we fully traversed the branching graph.}
    \State \Return $\{S\}.$
    \EndIf
    \If{$\lab(u)\in\Sigma$}
    \State $S\gets S\seq \lab(u).$
    \EndIf
    \State \textbf{//} Expanding the traversal.
    \State $Z\gets\{v \;|\; (u, v)\in \Ebren\land v\notin X\}.$\Comment{$Z$: IDs of unvisited branches of $u.$}
    \State \textbf{//} $next$ holds the potential next nodes directly after $u;$ it is never empty if $u\neq \bot.$
    \If{$Z \neq \emptyset$}\Comment{Can only hold if $\lab(u)=\brtext.$}
    \State $next\gets Z.$
    \ElsIf{$u\neq r$}
    \State $next\gets \{v\}$ where $v$ is the unique node such that $(u, v)\in \Eseq\cup\Ebrex.$
    \Else\Comment{It must be that $u = r.$}
    \State $next\gets \{\bot\}.$
    \EndIf
    \State \textbf{//} Updating $X.$
    \State $Y\gets X\cup \{u\}.$
    \State \textbf{//} Continue traversal.
    \State \Return $\bigcup_{v\in next\;} $\Call{Explore}{$v, Y, S$}.
    \EndFunction
  \end{algorithmic}
\end{algorithm}

\begin{lemma}\label{lem:enum-eulerian}
  Given a program $P\in\progs(\Sigma),$ Algorithm~\ref{alg:enum-eulerian} explores all Eulerian paths from $l$ to $r$ in $\BG(P),$ and it returns the set $\big\{\linearize(P,\pi)\;|\;\pi\text{ is a reordering of }P\big\}.$
\end{lemma}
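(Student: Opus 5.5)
The plan is a structural induction on $P$ that follows the inductive construction of $\BG(P)$. We strengthen the claim so that it covers calls to \textsc{Explore} made in the middle of a larger graph. Say the set $X$ is \emph{consistent at $u$} if it consists exactly of the nodes already fully entered along some prefix of an Eulerian-style traversal. The strengthened statement, for a subprogram $Q$ whose branching graph sits inside $\BG(P)$ with left node $l_Q$ and right node $r_Q$, is the following. A call \textsc{Explore}$(l_Q, X, S)$, where $X$ contains no node of $\BG(Q)$, branches into exactly one recursive continuation per sequence
\[
S\seq \linearize(Q,\pi_Q)
\]
with $\pi_Q$ ranging over the reorderings of $Q$. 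Each such continuation resumes at the unique $\Eseq\cup\Ebrex$-successor of $r_Q$ (or at $\bot$ if $r_Q=r$), with $X$ enlarged by exactly the nodes of $\BG(Q)$. Along the way, the edges traversed are exactly those of $\BG(Q)$, each used once. The lemma is then the case $Q=P$, $X=\emptyset$, $S=\epsilon$, continuing to $\bot$. Distinct $\pi$ giving identical strings collapse in the returned set, which is fine because the statement is about a set.

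The cases are as follows.

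\textbf{Case $Q=\sigma$.} The call appends $\sigma$. Here $Z=\emptyset$, since a $\Sigma$-labelled node has no branch-entry edges, so $next$ is the unique successor (or $\bot$). The node $u$ is added to $X$.

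\textbf{Case $Q=Q_1\seq Q_2$.} We apply the hypothesis to $Q_1$. Its continuation resumes at the successor of $r_1$, which by construction is $l_2$ via the sequential edge $(r_1,l_2)$. We then apply the hypothesis to $Q_2$ and concatenate. This uses the fact that $\linearize(Q_1\seq Q_2,\pi)=\linearize(Q_1,\pi|_{Q_1})\,\linearize(Q_2,\pi|_{Q_2})$, and that reorderings of $Q$ correspond bijectively to pairs of reorderings of $Q_1$ and $Q_2$.

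\textbf{Case $Q=[Q_1\br\dots\br Q_k]$ with node $u$.} This is the main step. The first visit to $u$ has $Z=\{l_1,\dots,l_k\}$, since $X$ contains no node of $\BG(Q)$. The algorithm chooses some $l_{i_1}$ and, by the hypothesis on $Q_{i_1}$, returns to $u$ through the branch-exit edge $(r_{i_1},u)$. The set $X$ now contains $u$ together with all nodes of $\BG(Q_{i_1})$, in particular $l_{i_1}$.

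The main obstacle is to check that on each revisit of $u$ the set $Z$ equals exactly the left nodes of the branches not yet taken. This is needed because $X$ records visited nodes, not visited edges. It holds because the branches have disjoint node sets, and each $l_j$ enters $X$ precisely when branch $Q_j$ has been traversed. Re-adding $u$ to $X$ is harmless. After $k$ rounds $Z=\emptyset$, and the algorithm takes the unique outgoing $\Eseq\cup\Ebrex$ edge of $u$ (or $\bot$ if $u=r$).

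Thus the runs correspond bijectively to permutations $(i_1,\dots,i_k)$ together with independent reorderings of each branch, which is exactly a reordering of $Q$. The produced string is $\linearize(Q_{i_1},\cdot)\cdots\linearize(Q_{i_k},\cdot)$, which matches $\linearize(Q,\pi)$ since $u$ carries label $\brtext$ and contributes no character. Every edge of $\BG(Q)$ is used exactly once: each entry edge $(u,l_j)$ once, each exit edge $(r_j,u)$ once, and the inner edges once by induction. This yields the Eulerian-path part of the claim.

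The claim that \emph{all} Eulerian paths are explored follows from the converse direction. From the prescribed start $l$, any Eulerian path is forced at every $\Sigma$-node, since such a node has out-degree one. At a $\brtext$-node the path must exhaust all branch-entry edges before leaving: the exit edge of $u$ is its only non-entry out-edge, and the path can never return to $u$ once it uses that edge. So every Eulerian path is of the form generated above, i.e.\ it is determined by a choice of order at each $\brtext$-node, which is exactly what the algorithm enumerates. To make the argument airtight, I would fix the precise form of the invariant on $X$ before running the induction.
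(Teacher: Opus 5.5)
Your proposal is correct and follows the same idea the paper relies on. The paper gives only the informal discussion preceding the lemma: a reordering is exactly a choice of branch order at each $\brtext$-labelled node, the traversal it induces is an Eulerian path whose $\Sigma$-labels spell $\linearize(P,\pi)$, and Algorithm~\ref{alg:enum-eulerian} exhausts these choices. Your structural induction (with $X$ disjoint from $\BG(Q)$ on entry and enlarged by exactly its nodes on exit), together with the no-return argument at $\brtext$-nodes for the converse, supplies the details the paper leaves as ``easy to see''.
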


\paragraph{Simultaneous Exploration for Two Programs} We can now run Algorithm~\ref{alg:enum-eulerian} on branching graphs $\BG(P_1)$ and $\BG(P_2)$ to obtain two sets of strings $\textbf{S}_1, \textbf{S}_2.$ This readily gives us an algorithm for \FMBR by simply returning $\max_{S_1\in\textbf{S}_1, S_2\in\textbf{S}_2} \SA(S_1, S_2,\delta).$ Alternatively, we can interleave those steps by simultaneously exploring both branching graphs while computing the SA on-the-fly. This is a key step to enable us to later compact the state space. See Algorithm~\ref{alg:fmbr}. {\scshape Upd-$\ast$} denotes the implementation of all the update operations referenced in {\scshape FuncMerg}. These functions, along with {\scshape Contains}, will be re-defined later to allow us to achieve a better running time.

\begin{restatable}{lemma}{lemcorrectness}\label{lem:correctness}
  Given two programs $P_1, P_2\in\progs(\Sigma)$ and a scoring function $\delta,$ Algorithm~\ref{alg:fmbr} returns $\FMBR(P_1, P_2,\delta).$
\end{restatable}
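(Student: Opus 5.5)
We have not seen Algorithm~\ref{alg:fmbr} itself, only its description: it runs two copies of the {\scshape Explore} traversal of Algorithm~\ref{alg:enum-eulerian} in lockstep on $\BG(P_1)$ and $\BG(P_2)$, and folds in the classical \SA{} recurrence. The plan assumes its state is a pair of exploration states $(u_1,X_1)$, $(u_2,X_2)$, and that the value stored for a state is the best alignment score of the suffixes still to be produced. The proof reduces Lemma~\ref{lem:correctness} to two facts. The first is Lemma~\ref{lem:enum-eulerian}. The second is the correctness of the standard \SA{} dynamic program. Concretely, we aim to show
\[
\textsc{FuncMerg} = \max_{S_1\in\textbf{S}_1,\,S_2\in\textbf{S}_2}\SA(S_1,S_2,\delta),
\]
where $\textbf{S}_i$ is the set returned by Algorithm~\ref{alg:enum-eulerian} on $P_i$. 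By Lemma~\ref{lem:enum-eulerian}, $\textbf{S}_i=\{\linearize(P_i,\pi)\}$ as $\pi$ ranges over reorderings of $P_i$. The right-hand side is then $\FMBR(P_1,P_2,\delta)$ by definition.

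\textbf{Step 1: the invariant.} For an exploration state $(u,X)$ of a single program, let $\mathrm{Suf}(u,X)$ denote the set of label strings that {\scshape Explore}$(u,X,\epsilon)$ would return, i.e.\ all possible remaining linearization suffixes. The same argument as in Lemma~\ref{lem:enum-eulerian} shows that the strings returned from the root are exactly the concatenations of a prefix already read with some element of $\mathrm{Suf}(u,X)$. This holds because {\scshape Explore} never inspects $S$ when choosing $next$. The claimed invariant is then
\[
\mathrm{val}(u_1,X_1,u_2,X_2)=\max_{T_1\in\mathrm{Suf}(u_1,X_1),\,T_2\in\mathrm{Suf}(u_2,X_2)}\SA(T_1,T_2,\delta).
\]

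\textbf{Step 2: induction.} We prove the invariant by induction on the number of edges remaining unvisited in the two graphs. This measure strictly decreases along every move, since each step consumes one edge of an Eulerian path.

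\emph{Base case.} Both sides are $\bot$, the suffixes are empty, and the value is $0$.

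\emph{Inductive step.} If the current node $u_i$ is labelled $\brtext$, or $u_i$'s instruction has already been consumed, the algorithm takes a max over the choices in $next_i$. Here $\mathrm{Suf}(u_i,X_i)$ is the union of the suffix sets of the successor states, and a maximum over a union is the maximum of the maxima. If both current nodes carry pending labels $a$ and $b$, the algorithm applies the three \SA{} moves: match with $\delta(a,b)$, gap in row 2 with $\delta(a,-)$, and gap in row 1 with $\delta(-,b)$. Every string in $\mathrm{Suf}(u_1,X_1)$ starts with $a$, and every string in $\mathrm{Suf}(u_2,X_2)$ starts with $b$. The standard \SA{} recurrence (first column of an optimal alignment) then gives
\[
\max_{T_1,T_2}\SA(aT_1',bT_2')=\max\bigl(\delta(a,b)+\cdots,\;\delta(a,-)+\cdots,\;\delta(-,b)+\cdots\bigr).
\]
Each term factors correctly because the choice of continuation for one program does not depend on the other program's state.

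\textbf{Main obstacle.} The difficult part is exchanging the order of the two maximizations. A single state must commit to one branch order for each program while the \SA{} recurrence is still choosing moves. We must show this is sound in both directions.

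\emph{Upper bound.} Any path through the product exploration fixes a reordering on each side, so its score is at most $\FMBR$.

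\emph{Lower bound.} Take an optimal pair $(\pi_1,\pi_2)$ and an optimal alignment of their linearizations. Follow the product exploration, making each branch-order choice according to $\pi_i$ and each \SA{} move according to the alignment's columns; this path attains the optimum.

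A further subtlety is the {\scshape Upd-$\ast$} bookkeeping: a node's label must be emitted exactly once even if the node is visited several times, as happens at $\brtext$ nodes when returning from branches. To handle this, I would first spell out the precise semantics of {\scshape Upd-$\ast$} and {\scshape Contains}, and check that the product state records which of $u_1,u_2$ have a pending unconsumed label.
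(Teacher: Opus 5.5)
Your proof is correct, but its main line is organized differently from the paper's.

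\textbf{The paper's route.} The paper argues globally. Projecting any chain of recursive calls onto one program gives a run of Algorithm~\ref{alg:enum-eulerian}, so by Lemma~\ref{lem:enum-eulerian} every pair $(S_1,S_2)\in\mathbf{S}_1\times\mathbf{S}_2$ is spelled out by some chain. Along a fixed pair the transitions are exactly those of the \SA{} recurrence, and taking the maximum over all pairs concludes.

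\textbf{Your route.} You instead give every state the meaning $\max\{\SA(T_1,T_2,\delta) : T_i\in\mathrm{Suf}(u_i,X_i)\}$ and verify it call by call. The candidate set is a Cartesian product, and $\mathrm{Suf}(u,X)$ is the union over $v\in next$ of $\mathrm{Suf}(v,X\cup\{u\})$, prefixed by $\lab(u)$ when $\lab(u)\in\Sigma$. Hence the maximum splits correctly at every kind of call. This already performs the exchange of maximizations you single out as the main obstacle. Your upper/lower-bound paragraph is therefore a second, redundant proof; it is essentially the paper's argument, made explicit in both directions.

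\textbf{What each buys.} Your invariant adds rigor exactly where the paper is brief. The algorithm never fixes a pair of linearizations in advance; it interleaves branch choices with \SA{} moves. Your local argument handles this without appealing to path semantics. It also shows that a state's value depends only on the sets $\mathrm{Suf}(u_i,X_i)$, which is a convenient criterion when $X_i$ is later replaced by signatures. The paper's route is shorter and reuses Lemma~\ref{lem:enum-eulerian} more directly.

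\textbf{Small repairs.}
\begin{itemize}
\item Your induction measure does not strictly decrease on the final moves $r_i\to\bot$, which traverse no edge of $\BG(P_i)$. Add the number of $i$ with $u_i\neq\bot$ to the measure, or simply induct on the finite recursion tree.
\item Your case list should include the boundary case where exactly one side is $\bot$. This is a single gap move, matching the \SA{} boundary recurrence.
\item The label-emission worry is moot. Only $\brtext$-labelled nodes are visited more than once, and they emit nothing. A $\Sigma$-labelled node has at most one in-edge and no branch-entry out-edges, so it is visited once. In Algorithm~\ref{alg:fmbr} a $\Sigma$-labelled current node always has its label pending, and that label is consumed exactly by the \SA{} move that leaves the node. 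So the case ``instruction already consumed'' never arises.
\end{itemize}
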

\begin{proof}
  See Appendix \ref{sec:appsec-1}.
\end{proof}

\begin{algorithm}[h]
  \caption{Computing \FMBR}\label{alg:fmbr}
  \begin{algorithmic}[1]
    \Require Two programs $P_1,P_2\in\progs(\Sigma)$ and a scoring function $\delta.$
    \State Let $\BG(P_i) = (V_i, \Eseq_i, \Ebren_i, \Ebrex_i, l_i, r_i,\lab_i, \neDe_i, \brID_i)$ for $i\in\{1, 2\}.$
    \State \Return \Call{FuncMerg}{$l_1$,$\emptyset$, $l_2$,  $\emptyset$}.

    \State \textbf{//} In $\BG(P_i),$ for $i\in\{1, 2\},$ $u_i$: currently traversed node; $X_i$: set of visited nodes.
    \Function{FuncMerg}{$u_1, X_1, u_2, X_2$}
    \If{$u_1 = \bot\land u_2 = \bot$}\Comment{$\bot$ indicates that we have fully traversed the branching graph.}
    \State \Return $0.$
    \EndIf
    \State \textbf{for} $i\in\{1,2\},$ $Y_i\gets X_i.$\Comment{$Y_i$ holds the updated set of visited nodes after traversing $u_i.$}
    \State \textbf{//} Expanding the traversal.
    \For{$i\in\{1, 2\}$ such that $u_i\neq\bot$}
    \State $Z_i\gets \{v \;|\; (u_i, v)\in \Ebren_i\land \lnot\text{\Call{Contains}{$u_i, v, X_i$}}\}$. \Comment{$Z_i$: IDs of unvisited branches of $u_i$}.
    \State \textbf{//} $next_i$ holds the potential next nodes directly after $u_i;$ it is never empty if $u_i\neq \bot.$
    \If{$Z_i \neq \emptyset$}\Comment{Can only hold if $\lab_i(u_i)=\brtext.$}
    \State $next_i\gets Z_i,Y_i\gets\Call{Upd-en}{i, u_i, X_i}.$
    \ElsIf{$u_i\neq r_i$}
    \State $v\gets$ the unique node such that $(u_i, v)\in \Eseq_i\cup\Ebrex_i.$
    \State $next_i\gets \{v\}.$
    \If{$(u_i, v)\in \Eseq_i$} $Y_i\gets\Call{Upd-seq}{i, u_i, X_i}.$ \Else ~$Y_i\gets\Call{Upd-ex}{i, u_i, X_i}.$ \EndIf
    \Else\Comment{It must be that $u_i = r_i.$}
    \State $next_i\gets \{\bot\},Y_i\gets\Call{Upd-done}{i, u_i, X_i}$
    \EndIf
    \EndFor

    \State \textbf{//} If $u_i$ is a branching construct, then it does not hold an instruction; continue its traversal.
    \If{$u_1\neq\bot\land\lab_1(u_1)=$~`$\brtext$'}
    \State \Return $\max_{w_1\in next_1}\;$\Call{FuncMerg}{$w_1, Y_1, u_2, X_2$}.
    \EndIf
    \If{$u_2\neq\bot\land\lab_2(u_2)=$~`$\brtext$'}
    \State \Return $\max_{w_2\in next_2}\;$\Call{FuncMerg}{$u_1, X_1, w_2, Y_2$}.
    \EndIf
    \State \textbf{//} SA-like transitions.
    \State \textbf{//} Here, either $u_i = \bot,$ $next_i = \{\bot\},$ or $next_i = \{v\}$ where $(u_i, v)\in \Eseq_i\cup\Ebrex_i.$
    \State \textbf{for} $i\in\{1, 2\},$ \textbf{if} $u_i\neq \bot,$ let $w_i$ be the unique element of $next_i.$
    \If{$u_1=\bot$}\Comment{Must have $u_2\neq\bot\land \lab_2(u_2)\in\Sigma.$}
    \State \Return $\delta(-, \lab_2(u_2)) \; + \;$\Call{FuncMerg}{$u_1, X_1, w_2, Y_2$}.
    \EndIf
    \If{$u_2=\bot$}\Comment{Must have $u_1\neq\bot\land \lab_1(u_1)\in\Sigma.$}
    \State \Return $\delta(\lab_1(u_1), -) \; + \; $\Call{FuncMerg}{$w_1, Y_1, u_2, X_2$}.
    \Else\Comment{Must have $\forall i\in\{1, 2\}.\;u_i\neq\bot\land\lab_i(u_i)\in\Sigma.$}
    \State \Return $\max\begin{cases}
        \delta(\lab_1(u_1), \lab_2(u_2)) + \text{\Call{FuncMerg}{$w_1, Y_1, w_2, Y_2$}}, \\
        \delta(\lab_1(u_1), -) + \text{\Call{FuncMerg}{$w_1, Y_1, u_2, X_2$}},           \\
        \delta(-, \lab_2(u_2)) + \text{\Call{FuncMerg}{$u_1, X_1, w_2, Y_2$}}.
      \end{cases}$
    \EndIf
    \EndFunction
    \Function{Upd-${\ast}$}{$i, u, X$}:
    \Return $X\cup \{u\}.$
    \EndFunction
    \Function{Contains}{$u, v, X$}: \Return \textbf{True} if $v\in X,$ and \textbf{False} otherwise.
    \EndFunction
  \end{algorithmic}
\end{algorithm}

\paragraph{Analysis of the State Space} We can apply dynamic programming on Algorithm~\ref{alg:fmbr} through a standard memoization: Whenever we reach a state $(u_1, X_1, u_2, X_2),$ we first check if we have already computed its value; if so, we return the stored value. Given the exponential number of states, this takes exponential time in the worst case. However, we can get a much better upper bound by carefully analyzing the structure of the sets $X_1, X_2.$ Namely, we will show that at a node $u_i$ currently traversed in $\BG(P_i),$ any corresponding set $X_i$ of visited nodes can be characterized by the reordering choices made at the $\bigO(d_i)$ branching constructs enclosing $u_i,$ leading to a limited number of possibilities for $X_i,$ which only depends on $b_i$ and $d_i.$ We now formalize this intuition.

In a program $P$ and a node $u$ in $\BG(P),$ we define its \emph{region} as the set of nodes whose subexpression appears within $u$'s subexpression. Formally, define $Region(u)= \{u\}\cup\bigcup_{(u, v)\in E^{en}}\overline{Region}(v)$ where $\overline{Region}(u)=\{u\}\cup\bigcup_{(u, v)\in E^{en}\cup E^{seq}}\overline{Region}(v).$ Note that this recursive definition eventually terminates as it only traverses edges in $\Ebren\cup\Eseq,$ which are acyclic. Finally, given a branch $B$ identified by a left node $l',$ define the \emph{region of $B$} as $\overline{\reg}(l').$ That is, $\reg(B)$ is union of the regions of all nodes in $B.$

Now suppose that we are traversing a node $u$ in $\BG(P)$ with the set of visited nodes being $X.$ Suppose that $\neDe(u) = d.$ Then, there are $d$ branching constructs enclosing $u$ with corresponding $\brtext$-labelled nodes $v^0,\dots, v^{d-1}$ such that $\neDe(v^i) = i.$ Let $v^d= u.$ For every $i\in\{0, \dots, d\},$ let $l^i, r^i$ be the right and left nodes of the branch of $v^i.$ This situation is illustrated in Figure~\ref{fig:compact}. We have $d+1$ paths $\rho_i=l^i\leadsto v^i\leadsto r^i$ using only $\Eseq$-edges for $i\in\{0, \dots, d\}.$ By definition of how we traverse branching graphs, we observe the following:
\begin{myitemize}
  \item For every $i$ and every node $w$ in the prefix $l^i\leadsto v^i$ with $w\neq v^i,$ we have $\reg(w)\subseteq X.$
  \item For every $i$ and node $w$ in the suffix $v^i\leadsto r^i$ with $w\neq v^i,$ we have $X\cap\reg(w)=\emptyset.$
  \item $\{v^0,\dots, v^{d-1}\}\subseteq X,$ and either $\reg(v^d)\cap X=\emptyset$ or $\reg(v^d)\subseteq X.$
  \item For every $i,$ suppose $v^i$ has $k_i$ branches $B_1^i,\dots, B_{k_i}^i$ with left nodes $l_1^i,\dots, l_{k_i}^i.$ Then, for every $j\in\{1,\dots, k_i\}$ s.t. $v^d\notin\reg(B_j^i),$ either $\reg(B_j^i)\subseteq X$ or $X\cap \reg(B_j^i) = \emptyset.$
\end{myitemize}
Based on these observations, it is now clear that the different possible values for $X$ when traversing $u$ are completely determined by the choice of which branches to include from each branching construct $v^i,$ for $i\in\{0, \dots, d\}.$\footnote{If $u=v^d$ is $\Sigma$-labelled, then we treat it as a branching construct with 0 branches.} Since each branching construct has at most $\brFa(P)$ branches, there are at most $2^{\brFa(P)}$ choices per branching construct. At the same time, either $\lab(u)=\brtext$ and we need to make $d+1\leq\neDe(P)$ choices at $v^0,\dots, v^d,$ or $\lab(u)\in\Sigma$ and we need to make $d\leq\neDe(P)$ choices at $v^0,\dots, v^{d-1},$ because there is only a single choice at $v^d.$ In either case, we have at most $2^{\brFa(P)\cdot \neDe(P)}$ possible values for $X$ when traversing $u.$ 

With this analysis, we get that throughout the execution of Algorithm~\ref{alg:fmbr}, for a fixed node $u_i,$ there are at most $2^{b_id_i}$ possible values for $X_i.$ As we remember $u_i, X_i$ for each of the two programs, we get the following:

\begin{figure}[h]
  \centering
  \includegraphics[scale=.975]{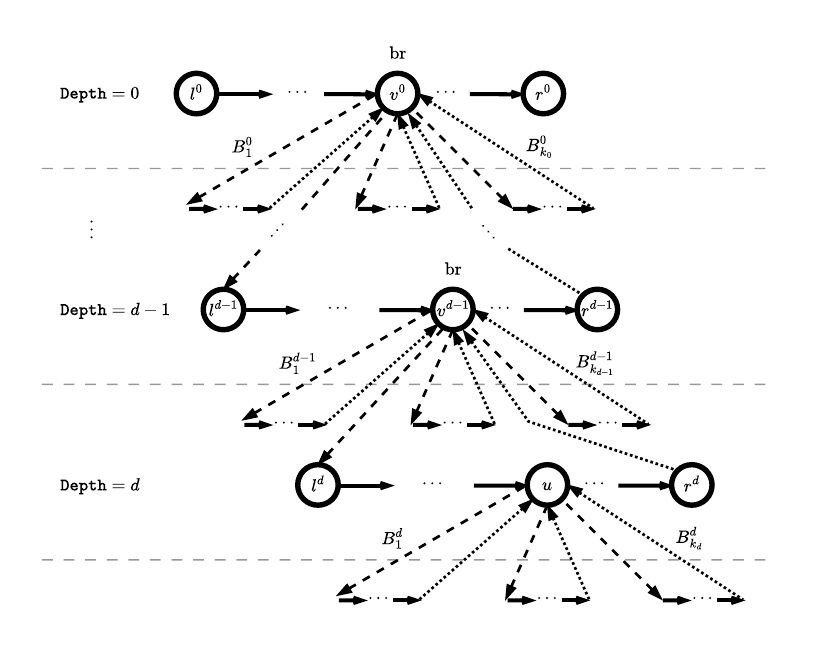}
  \caption{An illustration of the shape of visited nodes $X$ when traversing a node $u.$}
  \label{fig:compact}
\end{figure}

\begin{lemma}\label{lem:state-space}
  Given two programs $P_1, P_2\in\progs(\Sigma)$ and a scoring function $\delta,$ the number of distinct states $(u_1, X_1, u_2, X_2)$ encountered while executing Algorithm~\ref{alg:fmbr} is at most $\bigO\left(2^{b_1 d_1 + b_2 d_2}\cdot n_1n_2 \right)$.
\end{lemma}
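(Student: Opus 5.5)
The plan is to bound the number of states by counting, for each program separately, the pairs $(u_i, X_i)$ that can occur, and then multiply the two counts. Since the recursion in Algorithm~\ref{alg:fmbr} advances in $\BG(P_1)$ and $\BG(P_2)$ independently, every state $(u_1,X_1,u_2,X_2)$ reached has the following property: for each $i\in\{1,2\}$, the pair $(u_i,X_i)$ is a pair reached when Algorithm~\ref{alg:enum-eulerian} runs on $P_i$ along some Eulerian-path prefix. Concretely, $X_i$ is the set of nodes strictly preceding $u_i$ on a traversal prefix that respects some reordering. So it suffices to prove that, for each node $u$ of $\BG(P)$, at most $2^{\brFa(P)\cdot\neDe(P)}$ sets $X$ co-occur with $u$. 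We also handle $u_i=\bot$, which contributes one extra state with $X_i=V_i$.

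For the per-node bound, we make the four bulleted observations above rigorous by induction on the length of the traversal prefix. The invariant is as follows. Let $v^0,\dots,v^{d-1}$ be the enclosing branching nodes of $u=v^d$, and let $\rho_j$ be the sequential paths of their branches. Then $X$ is the disjoint union of the following pieces:
\begin{itemize}
\item the regions of the nodes on each $\rho_j$ strictly before $v^j$;
\item the nodes $v^0,\dots,v^{d-1}$ themselves;
\item for each $j$, the full regions of a subset of the branches of $v^j$ not containing $v^d$.
\end{itemize}
For $j=d$, this last subset is either all or none of the branches. The induction step checks each kind of move in Algorithm~\ref{alg:enum-eulerian}:
\begin{itemize}
\item A branch-entry move from $u$ pushes a new level, with an empty chosen set at that level.
\item A sequential move shifts $v^d$ along $\rho_d$ and adds $\reg(u)$, which is complete at that point, to the prefix.
\item A branch-exit move pops a level. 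The just-finished branch, whose region is now fully visited, joins the chosen subset of the parent $v^{d-1}$.
\end{itemize}
The key claim inside this step is that when we leave a branch along its right node, its whole region has been visited. This follows by a nested induction, because every branching node exits only when $Z=\emptyset$.

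Given the invariant, $X$ is determined by $u$ together with, for each of the at most $d\le\neDe(P)$ enclosing constructs, a subset of its at most $\brFa(P)$ branches. The choice at $v^d$ contributes no factor when $\lab(u)\in\Sigma$. When $\lab(u)=\brtext$, the state $(u,X)$ with $u$ at depth $d$ needs $d+1$ choices, but $d+1\le\neDe(P)$ in that case, since $u$ is itself a branching construct at depth $d$. Either way there are at most $2^{\brFa(P)\neDe(P)}$ possibilities for $X$. Summing over the at most $2|P_i|$ nodes gives $\bigO(2^{b_id_i}n_i)$ pairs for program $i$. Taking the product over $i\in\{1,2\}$ yields the claimed bound $\bigO(2^{b_1d_1+b_2d_2}n_1n_2)$.

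The main obstacle I expect is establishing the invariant cleanly, in particular showing that the branches of $v^j$ not containing $u$ are always either fully visited or fully untouched, and never partially visited. The argument I would try first is this: a branch is entered only via its branch-entry edge, it is traversed in a depth-first manner, and it is exited only once it is completely traversed. Hence at any moment the only partially visited branches are those on the stack of enclosing constructs of $u$.
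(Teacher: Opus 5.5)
Your proposal is essentially the paper's argument. You bound, for each node $u_i$, the number of visited-sets $X_i$ by $2^{b_id_i}$, using the fact that $X_i$ is determined by which branches of each enclosing construct $v^0,\dots,v^d$ are fully traversed, and then multiply over the $\bigO(n_1)\cdot\bigO(n_2)$ node pairs; your inductive invariant makes rigorous the four observations that the paper asserts without proof.

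One correction to that invariant: the clause ``for $j=d$ the subset is all or none'' (which mirrors the paper's third observation) fails when a $\brtext$-labelled $u$ is re-entered via a branch-exit edge. At that point $u\in X$ and the completed branches of $u$ form an arbitrary nonempty subset, which is exactly what your own branch-exit step produces. It should read ``either $u\notin X$ and the subset is empty, or $u\in X$ and the subset is nonempty.'' Your count already allots a full $2^{\brFa(P)}$ choice at $v^d$ in that case, so the final bound is unaffected.
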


\paragraph{Signatures} Lemma~\ref{lem:state-space} makes Algorithm~\ref{alg:fmbr} with memoization an \FPT~algorithm. However, manipulating and passing the sets $X_1, X_2$ throughout the recursive calls incurs an additional cost of $\bigO(n_1+n_2),$ leading to a costly input dependence of $\bigO(n_1n_2(n_1+n_2)).$ We avoid this by representing $X_i$ compactly via a \emph{signature.} Consider again the situation in Figure~\ref{fig:compact} when we traverse a node $u$ of $\BG(P).$ A \emph{signature} for $u$ is a \emph{stack} of $d+1$ records $[S_d, \dots, S_0]$ where each $S_i$ is a subset of $\{1,\dots, k_i\}$ indicating which branches of $v^i$ have been fully traversed. That is, $j\in S_i$ iff $\reg(B_j^i)\subseteq X.$ Here, the leftmost record $S_d$ is the top record of the stack. We now modify the algorithm to let $X_i$ denote a signature instead of a subset of $V_i,$ and change the {\scshape Upd-$\ast$} and {\scshape Contains} functions accordingly as shown in Algorithm~\ref{alg:signature-ops}. Finally, we replace the initial call to {\scshape FuncMerg} in line 3 to pass $[\emptyset],$ i.e.,~a signature containing one empty record, for $X_1$ and $X_2.$ With this change, the work we do in every state is only polynomial in $b_1, b_2, d_1, d_2,$ giving us the following theorem, which establishes that the parameterization by $b_1, b_2, d_1, d_2$ is in \FPT:

\begin{restatable}{theorem}{thmfptone}\label{thm:fpt1}
  Given two programs $P_1, P_2\in\progs(\Sigma)$ and a scoring function $\delta,$ the modified algorithm described above returns $\FMBR(P_1, P_2,\delta)$ and runs in time $\bigO\left(2^{b_1 d_1 + b_2 d_2} (b_1+b_2)^2(d_1+d_2)\cdot n_1n_2\right).$
\end{restatable}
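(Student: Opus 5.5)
The plan is to prove correctness and running time separately, deriving both from Lemma~\ref{lem:correctness} and Lemma~\ref{lem:state-space}.

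\textbf{Correctness.} I would show that the signature-based execution simulates the set-based Algorithm~\ref{alg:fmbr} step for step. Concretely, I would define an abstraction map $\alpha$ that sends a reachable pair $(u, X)$, with $X\subseteq V$ the set of visited nodes, to the stack $[S_d,\dots,S_0]$, where $j\in S_i$ iff $\reg(B^i_j)\subseteq X$. The four observations preceding Lemma~\ref{lem:state-space} show that $X$ is uniquely determined by $u$ and $\alpha(u,X)$, so $\alpha$ is injective for each fixed $u$.

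It then remains to check, operation by operation in Algorithm~\ref{alg:signature-ops}, that the following hold.
\begin{itemize}[noitemsep,topsep=0pt,leftmargin=*]
  \item {\scshape Contains}$(u,v,\alpha(u,X))$ agrees with $v\in X$ for every branch-entry edge $(u,v)$. This is a lookup in the top record.
  \item Each {\scshape Upd-$\ast$} commutes with $\alpha$:
  \begin{itemize}[noitemsep,topsep=0pt,leftmargin=*]
    \item entering a branch pushes a fresh empty record;
    \item a sequential edge resets the top record, since the next node has a fresh untouched region;
    \item a branch-exit edge pops the top record and adds the exited branch's index to the new top;
    \item done is handled trivially.
  \end{itemize}
\end{itemize}
The subtle case is the exit edge. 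Returning to $v^{i}$ means the branch containing $u$ is now fully traversed, so its index must be inserted into $S_i$. I would verify this against the $\brID$ attribute, which identifies the exited branch in $O(1)$.

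With this commutation established, the two recursions are isomorphic, and hence return the same value. By Lemma~\ref{lem:correctness}, that value is $\FMBR(P_1,P_2,\delta)$. Memoization does not change the returned value.

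\textbf{Running time.} Because $\alpha$ is injective per node, the number of distinct memoized states $(u_1,\sigma_1,u_2,\sigma_2)$ is bounded by Lemma~\ref{lem:state-space}, namely $O(2^{b_1d_1+b_2d_2}n_1n_2)$. For the per-state cost, I would encode a signature as a stack of at most $d_i+1$ bitmasks of width $b_i$, so it has size $O(b_id_i)$. Each state then does the following work.
\begin{itemize}[noitemsep,topsep=0pt,leftmargin=*]
  \item It branches to at most $\max(b_1,b_2)$ successors at a $\brtext$ node, or to at most $3$ successors otherwise.
  \item Each successor requires building its new signature, costing $O(b_i+d_i)$ or $O(b_id_i)$.
  \item Each successor also requires hashing the state key for the memo table, at cost $O(b_1d_1+b_2d_2)$.
\end{itemize}
Multiplying gives $O((b_1+b_2)\cdot(b_1+b_2)(d_1+d_2))=O((b_1+b_2)^2(d_1+d_2))$ per state. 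Multiplying by the number of states yields the claimed bound.

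\textbf{Main obstacle.} I expect the main difficulty to be the rigorous invariant that $\alpha$ commutes with the updates, especially at branch-exit and sequential edges. This requires an induction over the traversal showing that the stack depth always equals $\neDe(u)+1$, and that records correspond exactly to the enclosing constructs $v^0,\dots,v^d$. I would prove this by induction on the number of recursive calls, using the inductive construction of $\BG(P)$.
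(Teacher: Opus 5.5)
Your proposal is correct and follows essentially the same route as the paper. For correctness, you show that the signature-based {\scshape Contains} agrees with the set-based one and that each {\scshape Upd-$\ast$} tracks the visited set faithfully; your abstraction map $\alpha$ just makes this simulation explicit. For running time, you take the state bound of Lemma~\ref{lem:state-space} times an $\bigO\left((b_1+b_2)^2(d_1+d_2)\right)$ per-state cost, as the paper does. One minor remark: transferring the state count only needs every reached signature state to be the image of a reached set-based state, so injectivity of $\alpha$ is more than that step requires, although it does hold.
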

\begin{proof}
  See Appendix \ref{sec:appsec-2}.
\end{proof}

\begin{algorithm}[h]
  \caption{Operations on signatures}\label{alg:signature-ops}
  \begin{algorithmic}[1]
    \Function{Upd-en}{$i, u, X$}: $X.push(\emptyset)$; \Return $X.$\EndFunction
    \Function{Upd-seq}{$i, u, X$}: $X.pop, X.push(\emptyset)$; \Return $X.$\EndFunction
    \Function{Upd-ex}{$i, u, X$}:
    \State $X.pop$
    \State $v\gets$ the unique node such that $(u, v)\in\Ebrex_i.$
    \State Let $j$ be s.t. $\brID(u)$ is the ID of the $j$-th branch of $v.$
    \State $X.top\gets X.top\cup \{j\}$\Comment{Adding $j$ to the top record of the stack.}
    \State \Return $X.$
    \EndFunction
    \Function{Upd-done}{$i, u, X$}: \Comment{Must have $u=r_i\land|X|=1.$}
    \State $X.pop$; \Return $X.$\Comment{Now $X=\emptysgn.$}
    \EndFunction
    \Function{Contains}{$u, v, X$}
    \State Let $j$ be s.t. $v$ is the ID of the $j$-th branch of $u.$
    \State \Return \textbf{True} if $j\in X.top,$ and \textbf{False} otherwise.
    \EndFunction
  \end{algorithmic}
\end{algorithm}

\subsection{WPDS-based Algorithm}\label{sec:algorithms-2}

We further modify the algorithm of~Theorem~\ref{thm:fpt1} to avoid the high dependence on one of the depth parameters $d_i,$ resulting in an \FPT algorithm w.r.t. $b_1, b_2, d_{3-i}$ only. The idea is to first view the previous solution as solving a reachability problem on a stack-less WPDS, i.e.,~a weighted finite state machine. Then, we exploit the stack-like structure of signatures to encode one of the two signatures on the stack of the WPDS, subsequently allowing for arbitrary depth over that signature. In this subsection, when we refer to Algorithm~\ref{alg:fmbr}, we mean the modified version with signatures as described above.

\paragraph{Bounded SA/LCS Values} Our algorithm assumes that the scoring function $\delta$ either returns small integer values, or a very large negative value to forbid certain pairs in $(\Sigma\cup\{-\})^2$ from being aligned. This does not impose a restriction in practice as alignment scores for realistic programs are based on a simple scoring scheme where equivalent pairs of instructions get a small non-negative score, and other cases are either forbidden or incur a small penalty~\cite{seqalign,ssafuncmerg}. Thus, throughout the rest of this section, we suppose that $\delta(a,b)\in\{L,\dots, R\}\cup\{-\infty\}$ for all $a,b\in\Sigma\cup\{-\},$ where $L, R$ are integers in $\bigO(1).$ Without loss of generality, we further suppose that $L\leq 0\land 1\leq R.$ Note that $\delta_\LCS$ has $L=0,R=1.$

\paragraph{Intuition} The main insight behind our WPDS formulation is to look at the graph of the recursive calls induced by Algorithm~\ref{alg:fmbr}. Every node corresponds to a dynamic programming state $(u_1, X_1, u_2, X_2)$ and has an associated optimal value, i.e.,~the value {\scshape FuncMerg}$(u_1, X_1, u_2, X_2).$ The crucial observation is that a (possibly suboptimal) solution for this state corresponds to a path from this state to the terminal state $(\bot, \emptysgn, \bot, \emptysgn)$ in the graph of recursive calls, and the value of that solution is the \emph{sum} of the scores accumulated along the edges (choices) in that path. Thus, finding the optimal \FMBR solution boils down to finding the best path from $(l_1, [\emptyset], l_2, [\emptyset])$ to $(\bot, \emptysgn, \bot, \emptysgn)$  achieving the \emph{maximum} value. This motivates the choice of the semiring. 

Formally, we use the semiring $\semiring=(\semiringSet,\add,\mult,\zero,\one)$ where
$\semiringSet=\mathbb{Z}\cup\{-\infty\},\add = \max,\mult=+.$\footnote{Here, it is implicit that for $a\in\mathbb{Z}\cup\{-\infty\},$ we have $a+(-\infty)=(-\infty)+a=-\infty$ and $\max(a,-\infty)=\max(-\infty,a)=a.$} We define $\zero = -\infty,\one = 0.$ It is easy to verify that $\semiring$ is an idempotent semiring as defined in Section~\ref{sec:formulation-3}.

\paragraph{Stack-less WPDS} We now define a \emph{stack-less WPDS} $\pds_1$ where we have a single stack symbol $ \$ $ and all transitions are of the form $(s, \$)\pdsmove(s',\$).$ Each node corresponds to a dynamic programming state and transitions encode the recursive calls of the algorithm. More formally, we define a WPDS $\pds_1=(\pdsstates_1,\{ \$ \}, \pdstrans_1,\wt_1)$ over $\semiring$ where:
  \begin{myitemize}
    \item $\pdsstates_1 = \big\{(u_1, X_1, u_2, X_2) \;|\; \forall i\in\{1, 2\}.\; u_i\in V_i\land X_i\text{ is a signature for } u_i\big\},$ which corresponds to all the dynamic programming states.
    \item For every $s = (u_1, X_1, u_2, X_2)$ where $u_1\neq\bot\lor u_2\neq\bot,$ we look at the execution of {\scshape FuncMerg}$(s).$ It is not hard to see that it always returns a value of the form:
    $$\max\begin{cases}
        \mu_1 + \text{\Call{FuncMerg}{$s_1$}}, \\
        \quad\quad\quad\quad\vdots                    \\
        \mu_k  + \text{\Call{FuncMerg}{$s_k$}},
      \end{cases}$$
    where $k\geq 1$ and $\mu_i\in\{L,\dots, R\}\cup\{-\infty\}.$ Then, for every $i\in\{1,\dots, k\},$ we add to $\pdstrans_1$ a transition $(s, \$)\pdsmove(s_i, \$)$ and set $\wt_1((s, \$)\pdsmove(s_i, \$)) =  \mu_i.$
  \end{myitemize}

  \begin{restatable}{lemma}{lempdsreach}\label{lem:pds-reach}
    Given two programs $P_1, P_2\in\progs(\Sigma)$ and a scoring function $\delta,$ in the WPDS $\pds_1$ over $\semiring$ defined above, we have $\FMBR(P_1, P_2,\delta)=\reach(\langle (l_1, [\emptyset], l_2, [\emptyset]), \$ \rangle,\langle(\bot, \emptysgn, \bot, \emptysgn),\$\rangle).$
  \end{restatable}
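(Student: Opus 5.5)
The plan is to show that, for every state $s$ of $\pds_1$, the value {\scshape FuncMerg}$(s)$ equals $\reach(\langle s,\$\rangle,\langle t,\$\rangle)$, where $t=(\bot,\emptysgn,\bot,\emptysgn)$. Instantiating this at $s=(l_1,[\emptyset],l_2,[\emptyset])$ and invoking Lemma~\ref{lem:correctness} (in its signature-based form, as justified by Theorem~\ref{thm:fpt1}) then yields $\FMBR(P_1,P_2,\delta)$.

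\textbf{Step 1 (acyclicity).} Since the stack never changes, the configuration graph restricted to states reachable from the initial configuration is just the finite state graph of $\pds_1$ with $\$$ fixed. I will first show that this graph is acyclic, so that $\paths(\langle s,\$\rangle,\langle t,\$\rangle)$ is finite and the $\max$ is well defined. To do so, I will define a potential $\Phi(u_1,X_1,u_2,X_2)$ equal to the total number of branching-graph edges not yet traversed in the two Eulerian traversals. Every recursive call in Algorithm~\ref{alg:fmbr} advances at least one of $u_1,u_2$ along an edge, or moves it to $\bot$ (the latter counted as one final step). Hence $\Phi$ strictly decreases along every transition of $\pds_1$. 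The sequential, entry and exit moves are determined by the signature, so $\Phi$ is a function of the state; this is the point that needs care, and it follows from the characterization of visited nodes preceding Lemma~\ref{lem:state-space}. In particular, recursion terminates, and $t$ has no outgoing transitions.

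\textbf{Step 2 (induction on $\Phi$).} The base case is $s=t$. Here {\scshape FuncMerg}$(t)=0$, and the only path from $\langle t,\$\rangle$ to itself is the empty path, whose weight is $\one=0$. Otherwise, by construction of $\pdstrans_1$, we have {\scshape FuncMerg}$(s)=\max_i\big(\mu_i+\text{\scshape FuncMerg}(s_i)\big)$. Every nonempty path from $s$ to $t$ factors uniquely as a first transition to some $s_i$ followed by a path from $s_i$ to $t$. Distributivity of $+$ over $\max$ therefore gives
\[
\reach(\langle s,\$\rangle,\langle t,\$\rangle)=\max_i\big(\mu_i+\reach(\langle s_i,\$\rangle,\langle t,\$\rangle)\big),
\]
and the induction hypothesis closes the step. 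Paths that never reach $t$ contribute nothing, which is consistent with the convention $\bigoplus\emptyset=\zero=-\infty$. Entries with $\mu_i=-\infty$ are absorbed correctly by the semiring conventions. Duplicate transitions (the same $s_i$ arising with different $\mu$) need a remark: either keep the maximum weight, or note that the definition of $\reach$ takes the max over edges anyway.

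\textbf{Main obstacle.} The main difficulty is Step 1. The argument must show that the weights $\mu_i$ and successors $s_i$ are well defined from the state alone, in particular that $\Phi$ and the set of next moves depend only on $(u_i,X_i)$ and not on the history. This is what ensures that $\pds_1$ has no cycles, so that $\reach$ is a finite maximum and agrees with the memoized recursion. I would argue it via the signature invariant: the top record of $X_i$ lists exactly the fully traversed branches at the current level, and the unvisited edges are then determined.
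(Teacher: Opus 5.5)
Your proposal is correct and follows the paper's route. The paper notes that the part of $G_{\pds_1}$ reachable from the initial configuration is isomorphic to the recursion graph of Algorithm~\ref{alg:fmbr}, hence finite and acyclic, so that {\scshape FuncMerg} at the initial state equals the $\max$ of path weights to $(\bot,\emptysgn,\bot,\emptysgn)$; it then invokes Theorem~\ref{thm:fpt1}, and your potential $\Phi$ with the first-step induction just makes the acyclicity and path decomposition explicit (state the induction over states reachable from the initial one, where $\Phi$ is well defined; duplicate transitions never arise, since the successors in each case are pairwise distinct).
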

  \begin{proof}
    See Appendix \ref{sec:appsec-3}.
  \end{proof}

  \paragraph{Encoding Signatures on the Stack} We now turn the above stack-less WPDS $\pds_1$ into another one $\pds_2$ which only remembers the signature for one of the two programs, say $P_2,$ and uses the pushdown behavior of WPDSs to encode the signature for $P_1.$ 
  Concretely, we define $\pds_2=(\pdsstates_2,\pdsstk_2,\pdstrans_2,\wt_2)$:
  \begin{myitemize}
    \item $\pdsstates_2 = \big\{(u_1, u_2, X_2),(u_1, u_2, X_2)_j \;|\;  u_1\in V_1\land u_2\in V_2\land X_2\text{ is a signature for } u_2\land j\in\{1,\dots, b_1\}\big\}.$
    \item $\pdsstk_2 = \{S \;|\; S\subseteq \{1,\dots,b_1\}\}\cup\{\$\}.$
      \item For every $s = (u_1, X_1, u_2, X_2)$ where $u_1\neq\bot\lor u_2\neq\bot,$ we have two cases:
      
      \textbf{If $u_1=\bot$:} Then we look at the execution of {\scshape FuncMerg}$(\bot,\emptysgn,u_2,X_2).$ Similar to before, we obtain a set $\{(\mu_i, u_1^i, X_1^i, u_2^i, X_2^i)\}_i$ corresponding to successor states. Note that in this case, we are guaranteed to have, for all $i,$ $u_1^i=\bot\land X_1^i=\emptysgn.$ For all $i,$ we add to $\pdstrans_2$ a transition $((\bot, u_2, X_2), \$)\pdsmove((\bot, u_2^i, X_2^i), \$)$ with weight $\mu_i.$

      \textbf{If $u_1\neq\bot$:} Let $\alpha$ be the number of branches of $u_1.$ We consider every set $S\in\pdsstk_2\backslash\{ \$ \}$ where $S\subseteq\{1,\dots,\alpha\}.$ If $\alpha=0,$ then only $S=\emptyset$ is considered. We further distinguish two cases:
    \begin{myitemize}
      \item If there is no $v$ s.t. $(u_1,v)\in \Ebrex_1$ or $S\subset\{1,\dots, \alpha\},$ then trace {\scshape FuncMerg}$(u_1,[S],u_2, X_2)$ to get a set $\{(\mu_i, u_1^i, X_1^i, u_2^i, X_2^i)\}_i.$ We claim that this is a sound way of invoking {\scshape FuncMerg}. Note that by the conditions assumed on $S,$ {\scshape Upd-ex}$(1, u_1, [S])$ will not be called. All other signature operations only need to see the top record of the signature, which, for $u_1,$ is precisely $S.$ Now observe that for every $i,$ we can view $X_1^i$ as a string $T^i\in(\pdsstk_2)^*$ with its first character being the top record. This string tells us precisely how to update a signature with top record $S,$ and thus we simply add the transition $((u_1, u_2, X_2), S)\pdsmove((u_1^i, u_2^i, X_2^i), T^i)$ with weight $\mu_i.$
      \item Otherwise, we need special care to encode the effect of calling {\scshape Upd-ex}. We look at the execution of {\scshape FuncMerg}$(u_1,[S,\emptyset],u_2, X_2)$ to get $\{(\mu_i, u_1^i, X_1^i, u_2^i, X_2^i)\}_i,$ which is again sound as {\scshape Upd-ex} only needs to access the two topmost records. For every $i,$ we must have $X_1^i=\text{\scshape Upd-ex}(1, u_1, [S, \emptyset]) = [\{j\}]$ for some $j\in\{1,\dots, b_1\}.$ We add a transition $((u_1, u_2, X_2), S)\pdsmove((u_1^i, u_2^i, X_2^i)_j, \epsilon)$ with weight $\mu_i.$ Further, let $v$ be s.t. $(u_1, v)\in\Ebrex_1$ and let $\beta$ be the number of branches of $v.$ For every $S'\subseteq\{1,\dots, \beta\}$ with $j\notin S',$ we add a transition $((u_1^i, u_2^i, X_2^i)_j, S')\pdsmove((u_1^i, u_2^i, X_2^i), S'\cup\{j\})$ with weight $0.$
    \end{myitemize}

  \end{myitemize}

  This construction is the basis of the following theorem which shows that \FMBR~is in \FPT~w.r.t. $b_1, b_2, d_2.$ Clearly, this also gives an \FPT~algorithm w.r.t. $b_1, b_2, d_1,$ by swapping the two programs. The proof makes use of the boundedness of score values to show that application of the algorithm of~\cite{wpds} would only take polynomial time. The exponential dependence in the theorem follows in a manner similar to Theorem~\ref{thm:fpt1}, and the extra polynomial dependence on $n_1, n_2$ is due to the overhead of applying the WPDS algorithm.

\begin{restatable}{theorem}{thmfpttwo}\label{thm:fpt2}
  Given two programs $P_1, P_2\in\progs(\Sigma)$ and a scoring function $\delta$ returning values in $\{L,\dots, R\}$ where $L,R\in\bigO(1)\land L < 1\leq R,$ we can compute $\FMBR(P_1, P_2,\delta)$ in time $$2^{\bigO(b_2d_2+b_1)}(n_1n_2)^3(n_1+n_2).$$
\end{restatable}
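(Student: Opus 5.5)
The plan has three steps: show that $\pds_2$ is a faithful simulation of $\pds_1$ (hence of Algorithm~\ref{alg:fmbr}), bound the size of $\pds_2$, and then bound the cost of running the reachability algorithm of~\cite{wpds} on it.

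\textbf{Correctness.} I would reduce to Lemma~\ref{lem:pds-reach}. Define a map from configurations of $\pds_1$ to configurations of $\pds_2$ by
$$\langle (u_1, [S_d,\dots,S_0], u_2, X_2), \$\rangle \mapsto \langle (u_1,u_2,X_2), S_d S_{d-1}\cdots S_0\$\rangle,$$
so that the signature of $P_1$ becomes the pushdown stack with its top record on top. I would then prove, by case analysis on the construction, that every $\pds_1$ transition of weight $\mu$ corresponds to a $\pds_2$ path of weight $\mu$ between the images, and conversely. This path has length $1$, except for the {\scshape Upd-ex} case, where it has length $2$ and goes through the intermediate state $(\cdot)_j$.

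The key fact is that {\scshape FuncMerg} inspects only the top record of $X_1$, or the top two records when {\scshape Upd-ex} is called. Hence tracing it on $[S]$ or $[S,\emptyset]$ and replacing the top of the stack accordingly is sound. In the {\scshape Upd-ex} case, the pop-then-add-$j$ step is exactly realised by the $\epsilon$-pop followed by the rule $S'\mapsto S'\cup\{j\}$.

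Since the weights of corresponding paths agree, and the terminal configuration $\langle(\bot,\bot,\emptysgn),\$\rangle$ corresponds to $(\bot,\emptysgn,\bot,\emptysgn)$, the two $\reach$ values coincide. Lemma~\ref{lem:pds-reach} then gives $\FMBR(P_1,P_2,\delta)$.

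\textbf{Size.} By the signature analysis behind Lemma~\ref{lem:state-space}, the number of pairs $(u_2,X_2)$ is $2^{\bigO(b_2d_2)}n_2$. Therefore
$$|\pdsstates_2| = 2^{\bigO(b_2d_2)}\, b_1 n_1 n_2 \quad\text{and}\quad |\pdsstk_2|\le 2^{b_1}+1.$$
Each state and top symbol produces a constant number of traced successors, plus at most $2^{b_1}$ auxiliary rules at the $(\cdot)_j$ states. Every rule pushes at most two symbols, because Upd-en gives a stack of the form $\emptyset S$. So $|\pdstrans_2| = 2^{\bigO(b_2d_2+b_1)}n_1n_2$. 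Building $\pds_2$ costs polynomial time in $b_1,b_2,d_2$ per rule, which is dominated by the reachability step.

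\textbf{Reachability cost (main obstacle).} The algorithm of~\cite{wpds} is a saturation procedure on a weighted $\pdsstates_2$-automaton. Its running time is governed by the number of automaton transitions, $\bigO(|\pdsstates_2|^2|\pdsstk_2|)$, multiplied by the number of times a weight can strictly improve, i.e.\ by the height of the semiring's descending chains over the values that can actually occur. Our semiring $(\mathbb{Z}\cup\{-\infty\},\max,+)$ is not of bounded height, so this is the delicate point.

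To handle it, I would use the bounded-score assumption. Any path through the configuration graph from the source to the target, and any subpath realised within the saturation, spans at most $\bigO(n_1+n_2)$ $\Sigma$-consuming steps; non-$\Sigma$ steps have weight $0$. Each consuming step has weight in $\{L,\dots,R\}$. Hence every weight the saturation ever stores (other than $-\infty$) lies in an integer interval of length $\bigO((R-L)(n_1+n_2)) = \bigO(n_1+n_2)$. I expect the main work to be justifying this rigorously for the intermediate weights of the automaton, which summarise stack-segment paths rather than full paths. I would argue that each such weight is the weight of an actual segment of a consistent traversal, so it is bounded in the same way. Cycles cannot create unbounded weights because $\BG$ traversals with signatures never revisit a configuration with increased gain: every transition in the recursion graph of Algorithm~\ref{alg:fmbr} advances the pair of Eulerian traversals.

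Consequently each weight improves at most $\bigO(n_1+n_2)$ times. The total running time is then
$$|\pdsstates_2|^2\,|\pdsstk_2|\cdot\mathrm{poly}(|\pdstrans_2|\text{ factors})\cdot\bigO(n_1+n_2).$$
Accounting gives $2^{\bigO(b_2d_2+b_1)}(n_1n_2)^3(n_1+n_2)$: a factor $(n_1n_2)^2$ comes from pairs of states, an additional factor $n_1n_2$ from the rules processed per update, and the factor $n_1+n_2$ from the bounded chain height.
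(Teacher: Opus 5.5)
Your proposal is correct and follows essentially the paper's route. It uses the same simulation of $\pds_1$ by $\pds_2$: the paper merges the $0$-weight $(\cdot)_j$ edges, obtains a graph isomorphic to the reachable part of $G_{\pds_1}$, and invokes Lemma~\ref{lem:pds-reach}. It also uses the same size bounds, and the same replacement of the semiring height by the number of values that can actually arise, which the paper formalizes as the diversity $div(\pds_2)\in\bigO(n_1+n_2)$.

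One place to tighten is your justification of that bound. The saturation's weights are maxima over sets of paths of $G_{\pds_2}$ that start at arbitrary, possibly unreachable configurations whose stacks need not be genuine signatures, so an appeal to ``consistent traversals'' does not quite cover them. Instead, argue as the paper does: every path of $G_{\pds_2}$ has at most $2(n_1+n_2)$ nonzero-weight edges, because each such edge consumes a fresh edge of $\BG(P_1)$ or $\BG(P_2)$, and an exited branch is recorded in its enclosing record and can never be re-entered.
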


\begin{proof}
  See Appendix~\ref{sec:appsec-4}.
\end{proof}

\section{Complexity}\label{sec:hardness}

In this section, we study the complexity of \FMBR.

\paragraph{Program Size Dependence} Note that our dynamic programming algorithm in Section~\ref{sec:algorithms-1} has an $\bigO(n_1 n_2)$ factor in its running time. It is natural to ask whether this quadratic dependence can be improved to $\bigO\left(f(b_1,b_2,d_1,d_2)\cdot(n_1 n_2)^{1-\epsilon}\right)$ for some $\epsilon > 0,$ perhaps at the cost of a large $f.$ Unfortunately, this is impossible under the Strong Exponential Time Hypothesis (SETH) due to existing lower bounds for the computation of \LCS~\cite{lcshard1,lcshard2}, and the fact that \FMBR~is a generalization of \LCS. Namely,~\cite{lcshard1,lcshard2} prove that under SETH, computing $\LCS(S_1, S_2)$ on two strings of size $n$ cannot be done in time $\bigO(n^{2-\epsilon})$ for any $\epsilon > 0.$ Since \LCS~is a special case of \FMBR~where both input programs have no branching (i.e.,~for $i\in\{1, 2\},\;b_i=d_i=0$), we have the following corollary:

\begin{corollary}
  Unless SETH fails, there is no algorithm for $\FMBR$ that runs in time $$\bigO\left(f(b_1, b_2, d_1, d_2)\cdot (n_1 n_2)^{1-\epsilon}\right)$$ for an arbitrary computable function $f$ and any $\epsilon > 0.$
\end{corollary}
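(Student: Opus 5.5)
The corollary follows by a direct fine-grained reduction from \LCS to \FMBR. It needs no new algorithmic machinery; the only content is bookkeeping that the reduction preserves the parameters and the sizes.

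Suppose, towards a contradiction, that some algorithm $\mathcal{A}$ solves \FMBR in time $\bigO\left(f(b_1,b_2,d_1,d_2)\cdot(n_1n_2)^{1-\epsilon}\right)$ for a computable $f$ and some $\epsilon>0$. Take an arbitrary \LCS instance, i.e.,~two strings $S_1=a_1\cdots a_{m_1}$ and $S_2=c_1\cdots c_{m_2}$ over $\Sigma$ with $m_1,m_2\leq m$.

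\textbf{Step 1: build the instance.} Construct the straight-line programs $P_1 := a_1\seq a_2\seq\dots\seq a_{m_1}$ and $P_2 := c_1\seq\dots\seq c_{m_2}$. This takes linear time, and $n_i=|P_i|=2m_i-1=\bigO(m)$. If an empty string must be handled, we treat it trivially, since then $\LCS=0$.

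\textbf{Step 2: observe the parameters and reorderings.} Neither program contains a branching construct, so $b_i=d_i=0$. The only reordering of each $P_i$ is the empty function $\pi_{\text{I}}$, so $\linearize(P_i,\pi_{\text{I}})=S_i$. It follows directly from the definition of \FMBR that $\FMBR(P_1,P_2,\delta_{\LCS})=\SA(S_1,S_2,\delta_{\LCS})=\LCS(S_1,S_2)$.

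\textbf{Step 3: solve via $\mathcal{A}$.} For the decision variant, run $\mathcal{A}$ with threshold $\tau$. The running time is $\bigO(f(0,0,0,0)\cdot m^{2(1-\epsilon)})=\bigO(m^{2-2\epsilon})$, because $f(0,0,0,0)$ is a constant. This decides whether $\LCS(S_1,S_2)\geq\tau$. If the optimization version is required, a binary search over $\tau\in\{0,\dots,m\}$ adds only a $\bigO(\log m)$ factor, which is absorbed by choosing $\epsilon'=\epsilon/2$ in the exponent. Also, $\delta_{\LCS}$ is computable in $\bigO(1)$ time, as the formulation requires.

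\textbf{Step 4: contradict SETH.} The previous step yields an $\bigO(m^{2-\epsilon'})$ algorithm for \LCS on strings of length $m$, contradicting the SETH-based lower bound of~\cite{lcshard1,lcshard2} quoted just before the statement.

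The only mildly delicate point is matching the form of the lower bound (decision versus value, and the alphabet size required in~\cite{lcshard1,lcshard2}). Since our reduction preserves the strings exactly and works for any alphabet, it transfers whichever form those works prove.
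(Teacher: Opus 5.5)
Your proposal is correct and is the same argument the paper gives: straight-line programs have $b_i=d_i=0$ and only the identity reordering, so \LCS is the special case of \FMBR in which $f(0,0,0,0)$ is a constant. An algorithm of the stated form would then compute \LCS in time $\bigO(n^{2-2\epsilon})$, contradicting the SETH-based lower bounds. Your extra bookkeeping (program size $2m_i-1$, decision versus optimization, alphabet independence) is accurate and makes explicit what the paper leaves implicit.
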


We now consider lower bounds for parameterized variants of the problem. In the previous section, we showed that \FMBR parameterized by $b_1,b_2,d_i$ is in \FPT~for $i\in\{1, 2\},$ which raises the question of whether we can get rid of more parameters whilst keeping the problem in \FPT, or at least in \XP. In this direction, we prove a negative result: Relaxing the branching factor of one of the programs, while restricting all other three parameters, makes the problem \NP-hard, even with the \LCS~scoring function. This implies that any parameterization not including $b_i,$ for some $i\in\{1, 2\},$ is not in \XP, and thus not in \FPT, unless \PTIME= \NP.

\begin{theorem}\label{thm:dbdhard}
  For $i\in\{1, 2\},$ \FMBR~is \NP-hard even when restricted to inputs where $P_i$ has depth 1, i.e.,~$d_i=1,$ $P_{3-i}$ has no branching at all, i.e.,~$b_{3 - i} = d_{3 - i} = 0,$ and the scoring function is $\delta_\LCS.$
\end{theorem}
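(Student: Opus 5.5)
I would prove Theorem~\ref{thm:dbdhard} by a polynomial-time reduction from exact cover by 3-sets (\XThC), using the \LCS~scoring function $\delta_\LCS.$ Take the case $i=1$; the case $i=2$ follows by symmetry, since $\LCS$ and hence \FMBR~are symmetric in their two arguments. An \XThC~instance consists of a universe $U=\{1,\dots,3q\}$ and a family $C_1,\dots,C_m$ of $3$-element subsets of $U.$ It is a yes-instance iff some $q$ of the sets partition $U.$

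The reduction has the following shape:
\begin{myitemize}
  \item $P_2$ is a straight-line program spelling a fixed target string $T$ over fresh letters, so $b_2=d_2=0.$
  \item $P_1$ is a sequential composition of depth-1 branching constructs whose branches are straight-line strings, so $d_1=1$ while $b_1$ is unbounded.
  \item The most natural first attempt is $P_1=[w_1\br\cdots\br w_m].$ Here $w_j$ encodes $C_j=\{a<b<c\}$ as $a^K b^K c^K,$ and $T=1^K2^K\cdots(3q)^K.$ A reordering is then just a permutation of the $w_j,$ and one compares $\LCS$ of the concatenation with $T$ against the threshold $\tau=3qK.$
\end{myitemize}

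The easy direction is completeness. Given an exact cover, I place the $q$ chosen sets first, ordered so that their elements are increasing in $U$ where possible, and append the rest. I then read off a common subsequence achieving the threshold. Since the cover's sets need not occupy contiguous intervals of $U,$ I expect to refine the encoding, for instance by having $P_1$ consist of $3q$ sequential constructs. In that refinement, the $p$-th construct offers, as its branches, the occurrences of element $p$ inside the different sets, tagged with set-identifying letters. The branch ordering then amounts to selecting, for each element, which set covers it.

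The main obstacle is soundness: showing that an alignment reaching $\tau$ forces an exact cover. In the naive gadget, an optimal subsequence may use a set only partially, taking one element from $C_j$ and another from $C_{j'}$ with $C_j\cap C_{j'}\neq\emptyset.$ This covers $U$ without any exact cover existing. I would attack this with a consistency gadget that rewards using a set entirely:
\begin{myitemize}
  \item Each $w_j$ is given $3$ separator/marker letters, say $\$$ and a set-specific letter, each repeated with multiplicity $K'\gg K.$
  \item The target $T$ contains exactly $q$ marker blocks, interleaved with the element blocks.
  \item The threshold $\tau$ is chosen so that reaching it requires matching all $q$ marker blocks, each inside a single branch, together with all $3qK$ element letters.
\end{myitemize}
A counting argument with the multiplicities $K\ll K'$ should then show two things. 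First, each matched marker pins down one whole set. Second, no element can be double-counted or borrowed across sets without losing more than the slack left in $\tau.$ This yields $q$ disjoint sets covering $U.$ Getting the multiplicities and the interleaving of $T$ right so that the counting goes through, without partial matches sneaking past the threshold, is the step I expect to need the most care.

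Finally, I would check that the construction has polynomial size in $m$ and $q$ for polynomially chosen $K,K',$ and that its depth and branching parameters are exactly $d_1=1$ and $b_2=d_2=0$ as claimed.
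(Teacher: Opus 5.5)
\paragraph{There is a genuine gap.} Your proposal never fixes a concrete reduction, and it leaves the soundness direction, which is the real content of an \NP-hardness proof, to ``a counting argument should then show''. The pieces also do not fit together:
\begin{itemize}
\item Your first gadget $[w_1\br\cdots\br w_m]$ against $T=1^K\cdots(3q)^K$ fails completeness, as you note yourself.
\item The refinement with $3q$ per-element constructs is only sketched.
\item The marker gadget is then phrased in terms of the branches $w_j$ of the abandoned first attempt.
\end{itemize}
More importantly, the obstacle is structural, not a matter of tuning $K\ll K'$. You write the universe in sorted order into the fixed string $T$, so the three elements $a<b<c$ of a set land in far-apart blocks of $T$. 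An alignment is monotone, and $T$ cannot depend on the unknown cover, so $T$ can hold only generic markers. A generic marker block cannot sit next to all three of $a^K,b^K,c^K$. There is also no evident way for a fixed $T$ to check that the tags chosen at positions $a$, $b$, $c$ in your per-element variant agree. Note too that a reordering only permutes branches: every tagged occurrence stays in the linearization. So ``choosing which set covers $p$'' is not something the branch order does by itself.

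\paragraph{The missing idea is to swap the roles of sets and elements.} The paper writes each set contiguously, in its own delimited slot of the branch-free program: $P_2=Y\seq H_1\seq\cdots\seq H_m$ with $H_i=*\seq Q_i\seq Z\seq Y$. The universe elements become individually permutable branches of a single depth-1 construct, $P_1=[u_1\br\cdots\br u_{3n}\br Y\seq *\br\cdots\br Y\seq Z]\seq Y$, with $n$ copies of $Y\seq *$ and $m-n$ copies of $Y\seq Z$. The threshold $\tau_\star=3n+2m+1$ is exactly the length of every linearization of $P_1$, so every character must be matched. The argument then runs as follows.
\begin{itemize}
\item Both strings contain $m+1$ copies of $Y$, so the matching of the $Y$'s is forced.
\item Hence the block $xR_i$ between the $i$-th and $(i+1)$-th $Y$ of $S_1^\pi$, with $x\in\{*,Z\}$, must embed into $*Q_iZ$.
\item If $x=Z$, then $R_i=\emptyset$, because $Z$ comes after $Q_i$. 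If $x=*$, then $R_i\subseteq Q_i$.
\item Only $n$ selectors are $*$, the $R_i$ partition $U$, and $|Q_i|=3$. So the $n$ nonempty $R_i$ are whole sets and form an exact cover.
\end{itemize}
Completeness is a direct placement of $Y*Q_{p_j}$ or $YZ$ in slot order. This design turns covering into routing each element into one local slot, which a monotone alignment can check, and it needs no multiplicities at all.
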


\begin{restatable}{corollary}{cornotxp}\label{cor:notxp}
  For $i\in\{1, 2\},$ \FMBR~parameterized by $d_i, b_{3-i}, d_{3-i}$ is not in \XP~and thus not \FPT, unless \PTIME= \NP.
\end{restatable}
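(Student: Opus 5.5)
The plan is the standard argument that a problem which stays \NP-hard when its parameters are fixed to constants cannot be in \XP. I will combine Theorem~\ref{thm:dbdhard} with the definition of \XP given in Section~\ref{sec:formulation-2}. Fix $i\in\{1,2\}$ and argue by contraposition: suppose $L_{\FMBR}^{d_i,b_{3-i},d_{3-i}}\in\XP$, and derive $\PTIME=\NP$.

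First, the \XP assumption gives an algorithm $\mathcal{A}$ and a computable function $f:\mathbb{N}^3\rightarrow\mathbb{N}$ with the following property: on input $\langle (P_1,P_2,\delta,\tau), d_i, b_{3-i}, d_{3-i}\rangle$, $\mathcal{A}$ decides membership in time $\bigO\left(|x|^{f(d_i,b_{3-i},d_{3-i})}\right)$, where $x=(P_1,P_2,\delta,\tau)$. Second, consider the restricted class of instances from Theorem~\ref{thm:dbdhard}: $\neDe(P_i)=1$, $\brFa(P_{3-i})=\neDe(P_{3-i})=0$, and $\delta=\delta_\LCS$. On this class the parameter tuple is always exactly $(1,0,0)$. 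So $c:=f(1,0,0)$ is a fixed constant, and $\mathcal{A}$ runs in time $\bigO(|x|^{c})$ on every such instance.

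Third, I turn this into a polynomial-time algorithm for the restricted (unparameterized) problem. Given $x=(P_1,P_2,\delta_\LCS,\tau)$ from the class, compute $\neDe(P_i)$, $\brFa(P_{3-i})$ and $\neDe(P_{3-i})$ in linear time by a single pass over the programs; for instances of the class these are just $1,0,0$. Then run $\mathcal{A}$ on $\langle x,1,0,0\rangle$. This decides $\FMBR(P_1,P_2,\delta_\LCS)\geq\tau$ in polynomial time. Theorem~\ref{thm:dbdhard} says this restricted problem is \NP-hard, so composing any polynomial-time reduction from an \NP-complete problem with this algorithm gives $\PTIME=\NP$. The claim about \FPT then follows immediately from $\FPT\subseteq\XP$.

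I do not expect a real obstacle, since all the work is in Theorem~\ref{thm:dbdhard}. The only point needing care is bookkeeping: the parameterized language requires the stated parameters to equal the true values $\brFa$ and $\neDe$ of the programs. So I must note that the reduction behind Theorem~\ref{thm:dbdhard} produces instances whose parameter values are exactly $(1,0,0)$, or at least bounded by a constant. That is what makes the exponent $f(\cdot)$ a single constant. If the reduction produced only depth at most $1$ (for example, depth $0$ on degenerate instances), I would handle it by taking the maximum of $f$ over the finitely many possible parameter tuples.
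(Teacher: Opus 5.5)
Your proposal is correct and follows the same route as the paper's proof. Both assume an \XP~algorithm, note that on the restricted instances of Theorem~\ref{thm:dbdhard} the parameter tuple is the constant $(1,0,0)$ so the running time becomes $\bigO(|x|^{f(1,0,0)})$, and conclude that this contradicts \NP-hardness unless $\PTIME=\NP$. Your extra bookkeeping is careful but changes nothing: the stated parameters must equal the true values of $\neDe$ and $\brFa$, and the reduction does yield exactly $d_1=1$ and $b_2=d_2=0$.
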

\begin{proof}
  See Appendix~\ref{sec:appsec-5}.
\end{proof}

The remainder of this section is dedicated to proving Theorem~\ref{thm:dbdhard}. The proof is by a reduction from the \emph{exact cover by 3-sets} problem, denoted by \XThC, which is known to be \NP-hard~\cite{nphard}. Our~reduction will produce an \FMBR instance where $b_1 = 1,b_2 = d_2 = 0,\delta=\delta_\LCS,$ which proves the theorem for the case $i=1.$ The case $i=2$ trivially follows by modifying the reduction to swap the two programs.

\paragraph{\XThC} In \XThC, we are given a universe $U = \{u_1,\dots, u_{3n}\}$ of size $3n,$ and a family of 3-sets $\mathcal{Q} = \{Q_1, \dots, Q_m\}, \forall i.\;|Q_i| = 3.$ The task is to find $n$ sets $Q_{p_1},\dots, Q_{p_n}$ in that family that cover $U,$ i.e.,~$\bigcup_{i = 1}^n Q_{p_i} = U.$ Note that this necessarily implies that the $Q_{p_i}$'s are mutually disjoint. We assume that $m > n$ as otherwise the problem would be trivial.

\paragraph{Reduction to \FMBR} In the reduced instance, we use alphabet $\Sigma := \{u_1,\dots, u_{3n}, *, Y, Z\}.$ We suppose that for any set $Q_i,$ its elements have a fixed order. When writing $Q_i$ in a program (string), we are referring to the sequential composition (concatenation) of its elements according to this order. The reduced instance contains the following programs:

\begin{align*}
  P_1  :=  [u_1 \br \dots \br u_{3n} \br \underbrace{Y\seq * \br Y\seq * \br\dots\br Y\seq *}_{n\text{ times}}\br \underbrace{Y\seq Z \br Y\seq Z \br\dots\br Y\seq Z}_{m-n\text{ times}}]\seq Y \\
  P_2  := Y \seq H_1\seq\;\; \dots \;\;\seq H_m\quad\quad\quad\quad\quad\quad\quad\quad H_i := *\seq Q_i\seq Z\seq Y.\quad\quad\;\;
\end{align*}

We use the \LCS~scoring function $\delta_\LCS$ and use $\tau_\star := 3n + 2m + 1.$ The reduced instance is $(P_1, P_2,\delta_\LCS, \tau_\star).$ Clearly, the reduction runs in polynomial time.
The following lemma establishes the correctness of the reduction, thus concluding the proof of Theorem~\ref{thm:dbdhard}.
\begin{lemma}
  The \XThC instance has a solution iff $\max_{\pi_1,\pi_2} \LCS(\linearize(P_1,{\pi_1}),\linearize(P_2,{\pi_2}))\geq \tau_\star.$
\end{lemma}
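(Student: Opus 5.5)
The proof has two directions, and both rest on one length count. Every linearization of $P_1$ is a sequence of the $3n+m$ branch blocks in some order, followed by the final $Y$. These blocks are the $3n$ singletons $u_j$, the $n$ blocks $Y\seq *$ and the $m-n$ blocks $Y\seq Z$. Hence $|\linearize(P_1,\pi_1)| = 3n+2n+2(m-n)+1 = 3n+2m+1=\tau_\star$ for every $\pi_1$. Since $P_2$ has no branching, its linearization is fixed:
\[
S_2 \;=\; Y\; *\, Q_1\, Z\; Y\; *\, Q_2\, Z\; Y\;\cdots\; Y\; *\, Q_m\, Z\; Y .
\]
Consequently $\LCS(\linearize(P_1,\pi_1),S_2)\geq\tau_\star$ holds iff the entire linearization of $P_1$ under $\pi_1$ is a subsequence of $S_2$. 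Both directions will be phrased in terms of this embedding.

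\emph{($\Rightarrow$)} Let $Q_{p_1},\dots,Q_{p_n}$ be an exact cover. Scan $i=1,\dots,m$ and build the branch order of $P_1$ as follows. If $i$ is one of the $p_k$, append a block $Y\seq *$ followed by the three singletons of $Q_i$ in their fixed order. Otherwise, append a block $Y\seq Z$. This uses exactly $n$ blocks $Y\seq *$ and $m-n$ blocks $Y\seq Z$. It also uses every $u_j$ exactly once, because the chosen sets are disjoint and cover $U$, so it is a valid reordering. The embedding into $S_2$ is as follows. The $i$-th $Y$-block is matched to the $Y$ preceding $H_i$ and to either the $*$ or the $Z$ of $H_i$. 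The singletons following it are matched inside $Q_i$. The final $Y$ of $P_1$ is matched to the final $Y$ of $S_2$. This gives $5n+2(m-n)+1=\tau_\star$ matches.

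\emph{($\Leftarrow$)} Suppose the linearization $T$ of $P_1$ embeds into $S_2$. Both $T$ and $S_2$ contain exactly $m+1$ occurrences of $Y$. An embedding is injective and order preserving, so the $k$-th $Y$ of $T$ must map to the $k$-th $Y$ of $S_2$. Two consequences follow.

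First, any $u$-singletons placed before the first $Y$-block of $T$ would have to embed before the first $Y$ of $S_2$, which is impossible. So every singleton lies after some $Y$-block.

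Second, consider the segment of $T$ strictly between its $k$-th and $(k+1)$-th $Y$, for $k\le m$. It consists of either $*$ or $Z$, followed by some singletons, and it must embed into the segment $*\,Q_k\,Z$ of $S_2$. If it starts with $Z$, no singleton can follow, since nothing comes after $Z$ in that segment. If it starts with $*$, the singletons must form a subsequence of $Q_k$, so there are at most three of them and they all belong to $Q_k$.

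Thus all $3n$ singletons are distributed among the $n$ segments that begin with $*$, at most three per segment. Hence each such segment carries exactly the three elements of its $Q_k$. Every $u_j$ occurs exactly once in $T$, so the $n$ sets $Q_k$ indexed by these segments are pairwise disjoint and cover $U$, giving an exact cover.

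The main obstacle is the backward direction, specifically justifying that the $Y$'s are matched in lockstep. The approach is to reduce the LCS threshold to full containment using the exact-length observation, and then to obtain the lockstep matching by the counting of $Y$'s; after that, the segment-by-segment analysis is routine.
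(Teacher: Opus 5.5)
Your proof is correct and follows essentially the same route as the paper's: the exact-length count reduces the threshold $\tau_\star$ to full containment, the forward direction uses the same ordering of $Y\seq *$ blocks (each followed by $Q_{p_j}$) and $Y\seq Z$ blocks with the same matching, and the backward direction uses the same lockstep matching of the $m+1$ occurrences of $Y$ to confine each segment to $*\,Q_k\,Z$ before counting singletons. Your segment-between-consecutive-$Y$'s phrasing is simply a slightly cleaner packaging of the paper's decomposition $V_1R_1\cdots V_mR_mY$.
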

\begin{proof}  
  As $P_1$ has a single branching construct $[B_{1}\br\dots \br B_{3n+m}]$ with $3n+m$ branches, reorderings of $P_1$ correspond to permuting those branches. Let $\textbf{B} = (\linearize(B_{1},\reOrd_{\text{I}}),\dots, \linearize(B_{3n+m},\reOrd_{\text{I}})).$ For a permutation $\pi$ of $\textbf{B},$ let $S_1^\pi$ be the string obtained by permuting $\textbf{B}$ according to $\pi,$ concatenating its elements, then appending the character $Y.$ Let $S_2 :=\linearize(P_2,\reOrd_{\text{I}}).$ Since $P_2$ has no branching, and thus only has the identity reordering $\pi_\text{I},$ we get:
  $$\max_{\pi_1,\pi_2} \LCS(\linearize(P_1,{\pi_1}),\linearize(P_2,{\pi_2})) = \max_{\pi\text{ permutation of }\textbf{B}} \LCS(S_1^\pi, S_2),$$

  and therefore it suffices to show that $\XThC$ has a solution iff $\max_{\pi} \LCS(S_1^\pi, S_2)\geq\tau_\star.$ Observe that any $S_1^\pi$ has exactly $3n+2m+1=\tau_\star$ characters and hence $\max_\pi\LCS(S_1^\pi, S_2)\geq\tau_\star$ holds iff there is a permutation $\pi$ and an alignment between $S_1^\pi$ and $S_2$ where \emph{every} character $\sigma_1$ in $S_1^\pi$ is aligned with a character $\sigma_2$ in $S_2$ in the alignment matrix. In this case we say that $\sigma_1$ and $\sigma_2$ are \emph{matched} and that $S_1^\pi$ is \emph{fully matched} with $S_2.$ We now prove the two directions of the lemma.

  \paragraph{$\implies$} Suppose there is a cover $Q_{p_1},\dots, Q_{p_n}$ of $n$ mutually disjoint sets, where $p_1< \dots < p_n.$ We will construct a string $S_1^\pi,$ for some permutation $\pi$ of $\textbf{B},$ that is fully matched with $S_2.$ This is done in $m+1$ steps. Let $S_1^{(i)}$ be the string constructed after the $i$'th step, where we start with $S_1^{(0)}$ being the empty string. In the $i$'th step, $i\in\{1\dots, m\},$ if $i=p_j$ for some $j\in\{1,\dots, n\},$ then we set $S_1^{(i)} = S_1^{(i-1)} Y*Q_{p_j},$ and otherwise we set $S_1^{(i)} = S_1^{(i-1)} YZ.$ After those $m$ steps, we finally set $S_1^{(m+1)} = S_1^{(m)} Y,$ which adds the last $Y$ in $P_1$ to the string. We now claim that $S_1^{(m+1)}$ corresponds to a string of the form $S_1^\pi$ for some permutation $\pi$ of $\textbf{B}.$ Each of the first $m$ steps can be seen as \emph{removing} some strings from $\textbf{B},$ and throughout those steps, we remove exactly $n$ $Y*$'s, $m-n$ $YZ$'s, and we also remove every $u_i$ exactly once, which follows from the fact that the $Q_{p_i}$'s partition $U.$ Thus, after $m$ steps, we removed precisely all strings in $\textbf{B},$ and $\pi$ can be constructed from the removal order. With the last step, by definition we get $S_1^{(m+1)} = S_1^\pi.$
  
  We show that $S_1^\pi$ is fully matched with $S_2$ inductively by maintaining the following invariant for $i\in\{1, \dots, m\}$: All letters of $S_1^{(i)}$ are matched with $S_2,$ and the last letter in $S_1^{(i)}$ is matched with a letter strictly before the $Y$ in $H_i.$ For $i=1$ regardless of whether $S_1^{(1)} = Y*Q_{1}$ or $S_1^{(1)} = YZ,$ we can fully match it with the $Y*Q_1Z$ prefix of $S_2,$ which makes the invariant hold initially. For $i\in\{2,\dots, m\},$ we have by the invariant that $S_1^{(i-1)}$ is fully matched with $S_2,$ with last character being matched strictly before the $Y$ in $H_{i-1}.$ If $i=p_j$ for some $j\in\{1,\dots, n\},$ then $S_1^{(i)} = S_1^{(i-1)} Y*Q_{p_j}.$ Then, we match $S_1^{(i-1)}$ as before and in the $Y*Q_{p_j}$ suffix of $S_1^{(i)},$ we match the $Y$ with its counterpart in $H_{i-1}$ and match the $*Q_{p_j}$ with its counterpart in $H_i.$ In the other case, we have $S_1^{(i)} = S_1^{(i-1)} YZ.$ Again, we match the $S_1^{(i-1)}$ as before, match the last $Y$ in $S_1^{(i)}$ with the $Y$ in $H_{i-1},$ and finally match the last $Z$ in $S_1^{(i)}$ with the $Z$ in $H_i.$ In both cases, the invariant continues to hold. After we have matched all of $S_1^{(m)}$ with $S_2,$ the invariant tells us that the $Y$ in $H_m$ is still available for matching. We match it with the last $Y$ in $S_1^\pi = S_1^{(m)} Y,$ and now we have a full matching.

  \paragraph{$\impliedby$} For the other direction, suppose that there is a permutation of $\textbf{B}$ s.t. $S_1^\pi$ is fully matched with $S_2.$ We show that this gives rise to a valid solution to the $\XThC$ instance. The concern here is that $S_1^\pi$ tries to cover $U$ inconsistently by looking at more than $n$ sets while skipping some of the elements of those sets. This is where the $Z$'s come into play: They ensure that we match the universe's elements $u_1,\dots, u_{3n}$ in $S_1^\pi$ with exactly $n$ sets in $S_2,$ and for the other $m-n$ sets, represented by $YZ$'s, we are forced to entirely skip/exclude them.

  First, observe that $S_1^\pi$ and $S_2$ have the same number of $Y$'s ($m+1$), and since $S_1^\pi$ is fully matched with $S_2,$ \emph{the matching between the $Y$'s is determined.} This implies that $S_1^\pi$ must start with a $Y*$ or $YZ$ because otherwise we would miss the chance to match the first $Y$ in $S_2.$ Then, we can write $S_1^\pi$ in the form
  $
    V_1 R_1 V_2 \dots V_m R_m Y
  $
  where $V_i \in \{Y*, YZ\}$ with exactly $n$ of the $V_i$'s being $Y*$'s, and the $R_{i}$'s form a partition of $U$ with some $R_i$'s being possibly empty. We will show for any $i\in\{1,\dots, m\},$ any matching of $V_1R_1\dots V_{i} R_{i},$ satisfies the following:
  \begin{myenumerate}
    \item Its last character is matched with a character lying in $H_i$ and strictly before the $Y$ in $H_i;$ and
    \item Either $V_i = Y*$ and $R_i \subseteq Q_i,$ or $V_i = YZ$ and $R_i = \emptyset.$
  \end{myenumerate}
  To see this, let $V_iR_i = YxR_i,$ where $i\in\{1,\dots,m\}$ and $x\in\{*, Z\}.$ Using the fact that the matching for $Y$'s is determined, we get that the $Y$ in $V_i$ must be matched with the first $Y$ in $S_2$ (if $i=1$) or with the $Y$ in $H_{i-1}$ (if $i>1$). At the same time, the $Y$ right after $R_i$ (it always exists) must be matched with the $Y$ in $H_i,$ implying that $xR_i$ is forced to be matched with the $*Q_iZ$ substring of $H_i,$ which already proves the first part of the statement. With this restriction, $x$ must be matched with the corresponding symbol in $H_i$ and $R_i$ must be matched with the set $Q_i$ in $H_i,$ which shows that $R_i\subseteq Q_i.$ It remains to show that if $V_i = YZ$ (i.e.,~$x=Z$), then $R_i=\emptyset.$ Indeed, this is the case because the $Z$ in $V_i$ is matched strictly after the $Q_i,$ and hence we cannot have a non-empty matching from $R_i$ at all. Since $S_1^\pi$ is fully matched, it must be that $R_i=\emptyset.$

  The above statement implies that there are at most $n$ $R_i$'s that are non-empty, and they correspond to subsets of sets from $\mathcal{Q}.$ Since $R_i$'s partition $U,$ those $n$ $R_i$'s must actually represent full sets from $\mathcal{Q}$ because otherwise they will miss some elements from $U;$ recall that each set has exactly $3$ elements. Hence, the $R_i$'s give us a valid solution, which concludes our proof.  
\end{proof}

\section{Discussion}\label{sec:discussion}

In this section, we discuss some high-level aspects of our formulation and results, possible extensions, and limitations.

\paragraph{Scoring Functions} The scoring function can be exploited to capture various notions of similarity between instructions. Namely, it can be more involved than simple equality checking of its arguments~\cite{seqalign,ssafuncmerg}, and the returned score can be proportional to the similarity of the two instructions. More generally, we can define the scoring function to consider a pair of (sequences of) instructions equivalent if they induce the same transfer function over the program state. In this sense, we can let $\delta$ consider, say, the instructions \texttt{x++;y++} and \texttt{y++;x++} as equivalent and thus candidates for alignment as they both induce the same update to the program's state. However, in general, for longer sequences, this will require costly SMT calls to check equivalence.

\paragraph{Balancing Depth and Branching} We remark that any function can be rewritten so that its branching factor is at most $2.$ However, this transformation restricts the set of possible linearizations and moreover increases nesting depth. For instance, the program $[A\br B\br C]$ (with $b=3,d=1$) can be rewritten as $[[A\br B]\br C]$ (with $b=2,d=2$). The former can be linearized as $BCA$ whereas the latter cannot. This loss of information applies for any other way of rewriting. Nevertheless, such transformation provides a good approximation when we aim to reduce the branching factor. We can also do the reverse transformation to reduce nesting depth and increase the branching factor whilst gaining more linearizations. A well-engineered implementation should employ these two transformations in pre-processing to find a balance where both $b$ and $d$ are small, without sacrificing too much precision.

\paragraph{Fine-tuning Pair Selection} Our algorithm's runtime is inherently dependent on $b$ and $d.$ One can use this as a heuristic to only try pairs of functions that have smaller $b, d,$ thus reducing the optimization runtime. We note, however, that there is no direct tradeoff between the estimated similarity of selected pairs and our runtime. The algorithm runs fast on two non-similar functions if $b,d$ are small and will be slow if $b,d$ are large, even if the two functions are highly similar.

\paragraph{Variable Renaming} Another aspect that can increase the effectiveness of function merging is variable renaming, where we can rename variables in the two functions to increase their similarity and thus the potential for merging. This can be integrated into our algorithms as follows: Given a pair of functions, the merging algorithm can be run on that pair multiple times under different variable renamings, and we then choose the best merging over all renaming choices. Alternatively, we can interleave renaming with merging by augmenting the dynamic programming state with extra information to indicate which variables have been renamed to which. We note, however, that there are exponentially many variable renamings and thus such extension would require heuristics to approximate which renamings are profitable with respect to merging.

\paragraph{Encoding Conditionals} Conditionals were not encoded in our formulation, and incorporating them depends on the exact representation of conditionals as well as the procedure that merges two already-aligned functions. Nevertheless, the most likely scenario is that function merging is applied to a low-level intermediate representation language. There, compilers like GCC or LLVM split a condition into an instruction computing the predicate followed by a conditional jump statement, and append an unconditional jump to the end of the condition's body. For instance, a statement \texttt{if(X1)\{Y1\} elif(X2)\{Y2\}; Z} can be transformed into \texttt{b1=..; cond-jmp b1 L1; Y1; jmp L1'; b2=..; cond-jmp b2 L2; Y2; jmp L2'; Z}. This can be abstracted as:
\begin{center}
\texttt{[b1=..$\seq$ cond-jmp b1 L1 $\seq$ Y1 $\seq$ jmp L1'$\br$b2 =..$\seq$ cond-jmp b2 L2 $\seq$ Y2 $\seq$ jmp L2'] $\seq$ Z},  
\end{center}
which soundly accounts for conditionals. Merging this with another function will allow us to match the computation of \texttt{b1}, \texttt{b2} with a corresponding instruction in the other function.

\paragraph{Branch Fusion} Our techniques can be adapted to merge any two programs that use standard sequential and branching constructs, and not just functions. In particular, our results carry over to the closely related problem of \emph{branch fusion}, where the goal is to merge two similar branches of an \texttt{if-else} statement into a single branch in order to reduce code size~\cite{diverg,hybf}.

\paragraph{The Need for Well-nesting} Our approach heavily relies on the well-nesting of branches and only considers reorderings local to each branching construct. We note that our results can be stated more generally in terms of reordering the successors of nodes at the CFG-level, even when the CFG is not well-nested. However, this would crucially depend on the procedure used to identify meaningful points of reordering so as to balance the branching factor and the depth of the reordering.

\paragraph{Handling Code Motion and Data Dependencies} An interesting direction for future work is to extend our approach so that it supports code motion based on an arbitrary set of data dependence rules between the program statements, which do not necessarily follow the well-nested structure of the program. For example, an optimizing compiler might choose to move out a statement from the body of a loop, while keeping the program equivalent to its original version. This would enable new reorderings, which might open up the possibility of better alignments.


\section{Conclusion}\label{sec:conclusion}

In this work, we considered an expressive formulation for function merging that incorporates branch reordering. We then provided a thorough algorithmic and complexity-theoretic analysis of the problem. We showed that the problem is \NP-hard even under severe restrictions on the input programs, and we identified natural parameters that make the problem fixed-parameter tractable.

\section*{Acknowledgments and Notes}

We are thankful to the anonymous PLDI reviewers for their comments which helped improve our work significantly. This work was supported by the ERC Starting Grant 101222524 (SPES).  K. Kochekov, T. Shu and A.K. Zaher were supported by the Hong Kong PhD Fellowship Scheme (HKPFS). Author names are ordered alphabetically. A.K. Zaher is the corresponding author.

\bibliographystyle{ACM-Reference-Format}
\bibliography{refs}

\appendix

\section{Appendix} \label{sec:app}

\subsection{Proof of Lemma~\ref{lem:correctness}} \label{sec:appsec-1}

\lemcorrectness*

\begin{proof}
  We first give a quick refresher on the classical \SA algorithm. Given two strings $S_1, S_2,$ we define dynamic programming states for $i\in\{1,\dots, |S_1|\}, j\in\{1,\dots, |S_2|\}$ as: $$dp[i, j] = \SA(S_1[i,\dots, |S_1|],S_2[j,\dots, |S_2|],\delta).$$ That is, $dp[i, j]$ is the score of the optimal alignment between the suffixes $S_1[i,\dots, |S_1|]$ and $S_2[j,\dots, |S_2|].$ Each state is computed by exhausting all possibilities of the first column in the alignment between those suffixes, and then we handle the rest of the problem recursively. Namely, we have the following transitions for $i\in\{1,\dots, |S_1|\}, j\in\{1,\dots, |S_2|\}$:
\begin{align}\label{eqn:dp}
  dp[i, j] = \max \begin{cases}
                    \delta(S_1[i], S_2[j]) + dp[i+1, j+1], \\
                    \delta(S_1[i], -) + dp[i+1, j],        \\
                    \delta(-, S_2[j]) + dp[i, j+1].
                  \end{cases}
\end{align}
If $i=|S_1|+1\land j=|S_2|+1,$ then $dp[i, j] = 0.$ Otherwise, if $i=|S_1|+1,$ then $dp[i, j] = \delta(-, S_2[j]) + dp[i, j+1],$ and if $j=|S_2|+1,$ then $dp[i, j] = \delta(S_1[i], -) + dp[i+1, j].$

Back to \FMBR, let $\mathbf{S}_i$ be the set of all possible linearizations of $P_i,$ for $i\in\{1, 2\}.$ By definition, we have $\FMBR(P_1, P_2,\delta) = \max_{S_1\in\mathbf{S}_1, S_2\in\mathbf{S}_2} \SA(S_1, S_2,\delta).$ If we show that Algorithm~\ref{alg:fmbr} explores all pairs of linearizations $S_1, S_2,$ and that it moreover computes $\SA(S_1, S_2,\delta),$ ultimately taking the maximum over all pairs, then we are done. 

Consider a sequence of states $$(u_1^0, X_1^0, u_2^0, X_2^0), (u_1^1, X_1^1, u_2^1, X_2^1),\dots, (u_1^m, X_1^m, u_2^m, X_2^m)$$ visited along some series of recursive calls in Algorithm~\ref{alg:fmbr} starting from $(u_1^0, X_1^0, u_2^0, X_2^0):=(l_1, \emptyset, l_2, \emptyset).$ Fix $i\in\{1, 2\}.$ Consider the restricted sequence $(u_i^{j_1}, X_i^{j_1}),\dots, (u_i^{j_{m'}}, X_i^{j_{m'}}), m'\leq m,$ which ignores the $u_{3-i},X_{3-i}$ components and merges consecutive duplicates. This restricted sequence corresponds to a sequence of states visited by Algorithm~\ref{alg:enum-eulerian} when run on $\BG(P_i),$ and vice versa. In particular, reading $\Sigma$-labels of such restricted sequence gives a linearization in $\mathbf{S}_i.$ By Lemma~\ref{lem:enum-eulerian}, the union of all restricted sequences for the runs of Algorithm~\ref{alg:fmbr} is precisely the set $\mathbf{S}_i.$ 

Then, note that Algorithm~\ref{alg:fmbr} interleaves two runs of Algorithm~\ref{alg:enum-eulerian} on $\BG(P_1),\BG(P_2).$ Therefore, for every $S_1\in \mathbf{S}_1$ and $S_2\in\mathbf{S}_2,$ there is a sequence of states visited by Algorithm~\ref{alg:fmbr} whose restricted sequence over the $u_1, X_1$ components spells out $S_1$ and over the $u_2, X_2$ components spells out $S_2.$ If we fix one such pair, then the algorithm's transitions coincide with those of the \SA~algorithm sketched above, which computes $\SA(S_1, S_2,\delta).$ Namely, the base case is when $u_1 = u_2 = \bot,$ in which case we correctly return $0$ (lines 5-6). If only one of the $S_i$'s is fully traversed, then we apply the corresponding transition for the other string (lines 29-32). Otherwise, if both $S_1, S_2$ have more instructions, then we apply the transitions in Equation~(\ref{eqn:dp}) of the \SA algorithm (line 34). Given that we take the maximum over all pairs of linearizations, we get the desired result.
\end{proof}

\subsection{Proof of Theorem~\ref{thm:fpt1}} \label{sec:appsec-2}

\thmfptone*

\begin{proof}
  For correctness, by Lemma~\ref{lem:correctness}, we only need to show correctness of the updated {\scshape Upd-$\ast$} and {\scshape Contains} functions in Algorithm~\ref{alg:signature-ops}. It is not hard to see that with these definitions, invoking {\scshape Contains}$(u_i, v,X_i)$ in the modified Algorithm~\ref{alg:fmbr}, where $X_i$ is a signature, returns the same boolean value as invoking $\text{\scshape Contains}(u_i, v, X_i)$ in the original Algorithm~\ref{alg:fmbr}, where $X_i$ is a set of visited nodes. This implies that the modified Algorithm~\ref{alg:fmbr} explores the state space in the same way as the original Algorithm~\ref{alg:fmbr}, and thus computes the same value. 
  
  Namely, we start with a signature $X_i$ containing one empty record, corresponding to the empty set initially passed in the original algorithm. At {\scshape Upd-en}, we increase the depth and go to a previously unvisited node, so we push a new empty record on the stack. At {\scshape Upd-seq}, we move to the neighbouring same-depth node, which cannot have been visited before, so we get rid of the current top record and replace it with a new empty record. At {\scshape Upd-ex}, we decrease the depth and pop the current record from the stack. Moreover, we need to update the record of the (immediate) enclosing branching construct to inform it that $u_i$'s branch has been fully visited, which we do by inserting the index of that branch. At {\scshape Upd-done}, we have $u_i=r_i$ and the signature has exactly one record, so we pop it to get the empty signature $\emptysgn,$ which will persist until $P_i$ is fully processed, because $u_i$ will be set to $\bot.$ Finally, {\scshape Contains}$(u_i, v, X_i)$ checks whether the branch of $u_i$ identified by $v$ is not yet processed, which is a simple membership check on the top record of the stack.

  For the time complexity, note a signature for $P_i$ can be directly represented as a stack, with each of its elements being a subset of $\{1,\dots, b_i\}.$ Clearly, operations on a given stack element, e.g., inserting an index or checking membership, can be done in $\bigO(b_i)$ time. Manipulating the stack itself, e.g., copying, pushing, popping, can be naively done in $\bigO(b_i h)$ time, where $h$ is the height of the stack, which can be as large as $\bigO(d_i).$ Thus, each call to {\scshape Upd}-$\ast$ and {\scshape Contains} takes $\bigO(b_id_i)$ time.
  
  As argued in Lemma~\ref{lem:state-space}, we have $\bigO(2^{b_i\cdot d_i})$ possible signatures for $P_i.$ Since we remember an index and a signature for each program $P_i,$ we get $\bigO(2^{b_1 d_1 + b_2 d_2}\cdot n_1n_2)$ dynamic programming states. In each state, for $i\in\{1, 2\},$ we make $\bigO(b_i)$ calls to {\scshape Contains} and $\bigO(1)$ calls to {\scshape Upd}-$\ast,$ each taking $\bigO(b_i d_i)$ time, which is clearly upper bounded by $\bigO((b_1+b_2)^2(d_1+d_2)).$ Other routine operations, such as checking and fetching values from the memoization table, can also be done within the same time bound. This gives the claimed time complexity.
\end{proof}

\subsection{Proof of Lemma~\ref{lem:pds-reach}} \label{sec:appsec-3}

\lempdsreach*

\begin{proof}  
  Let $c_1:=\langle (l_1, [\emptyset], l_2, [\emptyset]), \$ \rangle, c_2:=\langle ( \bot,\emptysgn,\bot,\emptysgn), \$ \rangle.$ Define $H$ as the subgraph of the configuration graph $G_{\pds_1}$ induced by nodes reachable from the initial configuration $c_1.$ By construction of $\pds_1,$ we have that $H$ is isomorphic to the graph of recursive calls of Algorithm~\ref{alg:fmbr} when run on $(P_1, P_2,\delta).$ Note that this implies that $H$ is finite and acyclic and thus $\paths(c_1, c_2)$ is finite, making the value $\reach(c_1, c_2)$ well-defined. By definition of how {\scshape FuncMerg} computes its values, we have that {\scshape FuncMerg}$(l_1,[\emptyset],l_2,[\emptyset])$ is the $\add$ of path weights over paths $\rho$ in the recursion graph s.t. $\rho$ starts from $(l_1,[\emptyset],l_2,[\emptyset])$ and ends at $(\bot,\emptysgn,\bot,\emptysgn).$ By the aforementioned isomorphism, we have $\reach(c_1, c_2) = \text{\scshape FuncMerg}(l_1,[\emptyset],l_2,[\emptyset]).$ By Theorem~\ref{thm:fpt1}, we have $\FMBR(P_1, P_2,\delta)=\text{\scshape FuncMerg}(l_1,[\emptyset],l_2,[\emptyset])$ and thus $\reach(c_1, c_2) = \FMBR(P_1, P_2,\delta),$ as desired.
\end{proof}

\subsection{Proof of Theorem~\ref{thm:fpt2}} \label{sec:appsec-4}

Before proving Theorem~\ref{thm:fpt2}, we make some adjustments to allow the use of the reachability algorithm of~\cite{wpds} which, given configurations $c, c',$ computes $\reach(c, c').$ The algorithm's running time is polynomial in the size of input WPDS and the \emph{length of the longest ascending chain} in the semiring. Since our semiring can have infinite ascending chains, this is not directly applicable. However, we will show an alternative more fine-grained condition that suffices for the algorithm to work efficiently, and we will show that this condition is satisfied by the WPDS $\pds_2$ constructed for our problem.

Specifically, given WPDS $\pds=(\pdsstates, \pdsstk, \pdstrans,\wt)$ over semiring $\semiring,$ and configurations $c,c',$ \cite[Algorithm 1]{wpds} computes $\reach(c, c')$ by means of a saturation procedure that iteratively adds transitions to an $\pds$-automaton (which we do not formally define here) representing the set of configurations that can reach $c'.$ The algorithm runs in $\bigO(|\pdsstates| \cdot |\pdstrans|^2 \cdot X)$ operations, where $X$ is the number of times a transition $t$ of the $\pds$-automaton is updated to a new, previously unencountered, semiring value~\cite[pg. 219]{wpds}. Here, we observe that at any moment, the value of $t$ can be expressed as $\bigoplus_{\rho\in Q} \wt(\rho),$ where $Q$ is a set of paths in the configuration graph $G_{\pds}$ of $\pds.$ That is, the values encountered while running the algorithm are derived from the input WPDS and are not necessarily arbitrary elements of the semiring. Define the \emph{diversity} of $\pds,$ denoted $div(\pds),$ as the smallest size of a set $Y$ s.t. for any set of paths $Q$ in $G_{\pds},$ we have $\bigoplus_{\rho\in Q} \wt(\rho) \in Y.$ It follows that $X$ is upper bounded by the diversity of $\pds,$ giving us the following alternative running time bound: 

\begin{theorem}[\cite{wpds}]\label{thm:wpdsreach}
   Given a WPDS $\pds$ and configurations $c, c',$
  $\reach(c, c')$ can be computed in $\bigO(|\pdsstates| \cdot |\pdstrans|^2 \cdot div(\pds))$ basic operations.
\end{theorem}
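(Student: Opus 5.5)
The plan is to re-run the complexity analysis of the saturation procedure of~\cite[Algorithm~1]{wpds}, keeping everything the same except one step. That step is the bound on how often a single automaton transition can be updated. The original analysis bounds it by the length of the longest descending chain in $\semiring$; we will bound it by $div(\pds)$ instead.

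First we recall the structure of the procedure. It maintains a $\pds$-automaton $\mathcal{A}$ with weighted transitions. Its state set consists of the control locations of $\pds$, the states of the automaton accepting $\{c'\}$, and $\bigO(|\pdstrans|)$ auxiliary states introduced for push rules $(s,\gamma)\pdsmove(s',\gamma'\gamma'')$. Hence $\mathcal{A}$ has at most $\bigO(|\pdsstates|\cdot|\pdstrans|)$ transitions. A workset holds transitions whose weight has changed. Each time a transition is popped, the procedure scans the $\bigO(|\pdstrans|)$ rules that can combine with it and performs $\bigO(1)$ semiring operations for each. Every such combination produces a candidate value $w$, and the stored weight $v$ is replaced by $v\add w$; the transition is re-inserted into the workset only if $v\add w\neq v$. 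The total work is therefore
\[
\bigO\big((\#\text{transitions of }\mathcal{A})\cdot|\pdstrans|\cdot X\big)=\bigO\big(|\pdsstates|\cdot|\pdstrans|^2\cdot X\big),
\]
where $X$ bounds the number of strict updates per transition. This is exactly the accounting of~\cite[pg.~219]{wpds}, which we take as given.

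Second, we bound $X$ by $div(\pds)$. Define $v\sqsubseteq u$ iff $v\add u=v$. Because $\add$ is idempotent, commutative and associative, $\sqsubseteq$ is a partial order. Each update replaces $v$ by $v'=v\add w$, so $v'\sqsubseteq v$, and a strict update has $v'\neq v$. The successive values of a fixed transition thus form a strictly descending chain, and by antisymmetry they are pairwise distinct. It therefore suffices to show that every value ever stored lies in some fixed set $Y$ with $|Y|=div(\pds)$.

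Third, and this is the main obstacle, we prove that each stored value has the form $\bigoplus_{\rho\in Q}\wt(\rho)$ for some set $Q$ of paths in $G_\pds$. We do this by an invariant, by induction on the saturation steps, attaching to each automaton transition $t=(q,\gamma,q')$ an explicit set of paths $Q_t$ with $\wt(t)=\bigoplus_{\rho\in Q_t}\wt(\rho)$.

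The base cases are straightforward. Initial transitions of the target automaton carry $\one$, which is the weight of the empty path. Transitions not yet present carry $\zero$, which is the $\add$ of the empty set.

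For the saturation step, take a rule $r$ combined with automaton transitions $t_1,t_2$. The new candidate is $\wt(r)\mult\wt(t_1)\mult\wt(t_2)$. By distributivity this equals $\bigoplus\wt(r\cdot\rho_1\cdot\rho_2)$ over $\rho_1\in Q_{t_1}$ and $\rho_2\in Q_{t_2}$. The concatenations $r\cdot\rho_1\cdot\rho_2$ are genuine paths of $G_\pds$, obtained by prefixing the edge induced by $r$ and lifting $\rho_1,\rho_2$ with the appropriate stack suffix. The $\add$ with the old value then corresponds to taking the union of the path sets.

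The delicate point is the auxiliary states created for push rules. Transitions leaving or entering them correspond to partial computations, namely paths that have not yet popped the pushed symbol. We handle this by stating the invariant for pairs $(q,\gamma,q')$ in the form ``paths from $\langle q,\gamma w\rangle$ to some configuration whose remaining stack is $w$ continued from $q'$''. Under this formulation each such path is still a path of $G_\pds$, so each stored value is still an $\add$ over a set of paths of $G_\pds$. Infinite path sets cause no problem, because the definition of $div(\pds)$ quantifies over arbitrary sets $Q$.

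Combining the three steps, every stored value lies in $Y$ and the values of a single transition are pairwise distinct, so $X\le div(\pds)$. Substituting into the bound from the first step gives $\bigO(|\pdsstates|\cdot|\pdstrans|^2\cdot div(\pds))$ basic operations. Finally, $\reach(c,c')$ is read off from the saturated automaton exactly as in~\cite{wpds}, and correctness is inherited unchanged, since we only altered the termination and complexity argument.
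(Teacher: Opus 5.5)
Your proposal follows the paper's argument: you take the $\bigO(|\pdsstates|\cdot|\pdstrans|^2\cdot X)$ accounting of the saturation procedure of \cite{wpds} as given, observe that every value an automaton transition ever holds is an $\add$ of weights of paths in $G_\pds$, and conclude $X\le div(\pds)$; your monotonicity argument and inductive invariant make explicit what the paper asserts in one line. Two small remarks: as the paper describes it, the saturation builds an automaton for the configurations that can reach $c'$ (a pre$^*$-style construction), which adds transitions but no auxiliary states for push rules (those arise in post$^*$), so your ``delicate point'' does not actually occur; and your distributivity step is legitimate because the sets $Q_t$ produced by your induction are always finite (the semiring axioms only give finite distributivity), not because $div(\pds)$ quantifies over arbitrary sets.
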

Here, basic operations refer to semiring operations, comparisons, and manipulations of stack elements and states. We next show that the diversity of $\pds_2$ defined in Section~\ref{sec:algorithms-2} is polynomial:

\begin{proposition}\label{prop:div}
  $div(\pds_2)\in\bigO(n_1+n_2).$
\end{proposition}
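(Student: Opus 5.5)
We need to show that every value of the form $\bigoplus_{\rho\in Q}\wt(\rho)$, for a set $Q$ of paths in $G_{\pds_2}$, lies in a fixed set $Y$ of size $\bigO(n_1+n_2)$. The natural candidate is $Y=\{-\infty\}\cup\{k\in\mathbb{Z} : L(n_1+n_2)\le k\le R(n_1+n_2)\}$, which has size $(R-L)(n_1+n_2)+2\in\bigO(n_1+n_2)$ because $L,R\in\bigO(1)$. Since $\add=\max$, the value $\bigoplus_{\rho\in Q}\wt(\rho)$ is either $-\infty$ (if $Q$ is empty or every path in it has weight $-\infty$) or equal to $\wt(\rho)$ for some single path $\rho\in Q$. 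So it suffices to show that every path of finite weight in $G_{\pds_2}$ has weight in $[L(n_1+n_2),\,R(n_1+n_2)]$.

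First I would classify the transitions of $\pds_2$ by weight. The auxiliary pop-and-update transitions $((\cdot)_j,S')\pdsmove(\cdot,S'\cup\{j\})$ have weight $0$. Every other transition mirrors one recursive step of {\scshape FuncMerg}, and its weight $\mu$ is either $0$ (a step at a $\brtext$-labelled node, which performs no alignment) or a value $\delta(\cdot,\cdot)$ (an SA-like step). In an SA-like step, at least one of $u_1,u_2$ is a $\Sigma$-labelled node, and that node is advanced past, i.e., one instruction of $P_1$ or $P_2$ is consumed. The weight of a path is therefore the sum of its $\delta$-contributions, and each summand lies in $\{L,\dots,R\}\cup\{-\infty\}$.

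The key step is to bound the number of nonzero-weight transitions on any path by $n_1+n_2$. The $u_i$-components of the states evolve exactly like traversals of the branching graphs, i.e., along Eulerian-path prefixes as in Lemma~\ref{lem:enum-eulerian}. Each SA-like transition advances $u_1$, $u_2$, or both from a $\Sigma$-labelled node. In an Eulerian traversal, each $\Sigma$-labelled node of $\BG(P_i)$ is left at most once, since it has a unique outgoing edge and that edge is used at most once. Hence a path contains at most $|\Sigma\text{-nodes of }\BG(P_1)|+|\Sigma\text{-nodes of }\BG(P_2)|\le n_1+n_2$ SA-like transitions. Since $L\le 0\le R$, a finite sum of at most $n_1+n_2$ terms from $\{L,\dots,R\}$ lies in $[L(n_1+n_2),R(n_1+n_2)]$.

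The main obstacle is justifying that paths in $G_{\pds_2}$, including those starting from arbitrary configurations rather than from the initial one, really correspond to traversal fragments in which each instruction is consumed at most once. The configuration graph contains configurations that are unreachable from the start, with arbitrary stack contents. I would handle this by observing that the node components $u_1,u_2$ only ever move forward along edges of $\BG(P_i)$ that leave a $\Sigma$-node. Such edges are sequential or branch-exit edges, and these form an acyclic structure. Moreover, revisiting a $\Sigma$-node requires re-entering its branch through a branch-entry edge, which the top stack record forbids within the same enclosing visit. If this argument turns out to be delicate, the fallback is to restrict the diversity argument to paths that the saturation algorithm actually manipulates. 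Those are paths from configurations reachable in the relevant sense, which correspond to genuine executions of {\scshape FuncMerg} by Lemma~\ref{lem:pds-reach}, and for those the bound on the number of consumed instructions is immediate.
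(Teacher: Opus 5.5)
Your proposal is correct and follows the paper's argument. Since $\add=\max$, it suffices to bound the weight of a single finite-weight path, and every nonzero-weight transition consumes a fresh element of $\BG(P_1)$ or $\BG(P_2)$; you count $\Sigma$-nodes to get the range $[L(n_1+n_2),R(n_1+n_2)]$, whereas the paper counts the at most $2n_i$ branching-graph edges to get $[2L(n_1+n_2),2R(n_1+n_2)]$. Your concern about paths from configurations not reachable from the initial one goes beyond the paper, which simply asserts the fresh-edge property, and your primary argument handles it: re-entering a branch is blocked by its parent's record, and leaving that parent is irreversible by induction on depth. So the fallback of redefining diversity is not needed.
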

\begin{proof}
Take a path $\rho$ in $G_{\pds_2}.$ If one of $\rho$'s edges has weight $-\infty,$ then $\wt(\rho)=-\infty.$ Otherwise, observe that:
\begin{myitemize}
  \item $\rho$ has at most $2(n_1+n_2)$ edges $e$ with $\wt(e)\neq 0.$\footnote{Recall that all the transitions of the form $((u_1, u_2, X_2)_j, S)\pdsmove((u_1, u_2, X_2), S\cup\{j\})$ have weight $0.$} Indeed, at every such edge of $\rho,$ we traverse a new edge of the branching graph $\BG(P_1)$ or $\BG(P_2),$ and $\BG(P_i)$ has at most $2|P_i|=2n_i$ edges.
  \item The weight of every edge in $\rho$ is in $\{L,\dots, R\}.$
\end{myitemize}
These two facts imply that the weight of $\rho$ is in $\{2L(n_1+n_2),\dots, 2R(n_1+n_2)\}.$ Recall that $L\leq 0\land 1\leq R.$ Overall, we have $\wt(\rho)\in\{2L(n_1+n_2),\dots, 2R(n_1+n_2)\}\cup\{-\infty\}.$ Since $\add = \max,$ for any set $Q$ of paths in $G_{\pds_2},$ we have that $\bigoplus_{\rho\in Q} \wt(\rho)$ is also in $\{2L(n_1+n_2),\dots, 2R(n_1+n_2)\}\cup\{-\infty\},$ which has size $\bigO(n_1+n_2).$
\end{proof}

We are now ready to prove Theorem~\ref{thm:fpt2}.

\thmfpttwo*

\begin{proof}
  The algorithm is simple: We first construct the WPDS $\pds_2$ as described, and then use Theorem~\ref{thm:wpdsreach} to compute $\nu := \reach(\langle (l_1, l_2, [\emptyset]), \emptyset\$\rangle,\langle (\bot,\bot,[]),\$\rangle),$ and then return $\nu.$ 
  
  \paragraph{Correctness} We rely on the proof of Lemma~\ref{lem:pds-reach}. In the stack-less WPDS $\pds_1,$ let $H$ be the subgraph of the configuration graph $G_{\pds_1}$ induced by nodes reachable from $\langle (l_1, [\emptyset], l_2, [\emptyset]), \$\rangle.$ Similarly, define $K$ as the subgraph of $G_{\pds_2}$ induced by nodes reachable from $\langle (l_1, l_2, [\emptyset]), \emptyset\$\rangle.$ Further, let $K'$ be the result of taking all edges in $K$ whose respective transition is of the form $((u_1, u_2, X_2)_j, S)\pdsmove((u_1, u_2, X_2), S\cup\{j\}),$ then merging their endpoints. By construction of $\pds_2$ and definition of configuration graphs, it is clear that $K'$ is isomorphic to $H,$ and thus $\reach_{K'}(\langle (l_1, l_2, [\emptyset]), \emptyset\$\rangle,\langle (\bot,\bot,[]),\$\rangle)=\reach_H(\langle (l_1, [\emptyset], l_2, [\emptyset]), \$ \rangle,\langle(\bot, \emptysgn, \bot, \emptysgn),\$\rangle)=\FMBR(P_1, P_2,\delta),$ by Lemma~\ref{lem:pds-reach}. By unmerging the edges of $K'$ back to get $K,$ all paths only get an extra edge of weight $0,$ and thus reachability values are preserved, implying: $$\nu=\reach_K(\langle (l_1, l_2, [\emptyset]), \emptyset\$\rangle,\langle (\bot,\bot,[]),\$\rangle)=\FMBR(P_1, P_2,\delta).$$
  
  \paragraph{Running time} $\pds_2$ has $|\pdsstates_2| =2^{\bigO(b_2d_2)}b_1 n_1n_2$ states and $|\Delta_2|=2^{\bigO(b_2d_2+b_1)} n_1n_2$ transitions. Note that any polynomial term $\param^{\bigO(1)}$ can be absorbed into a $2^{\bigO(\param)}$ term. The construction of the WPDS can be done in time $|\Delta_2|\cdot (b_1b_2d_2)^{\bigO(1)}=2^{\bigO(b_2d_2+b_1)} n_1n_2$ time. By Theorem~\ref{thm:wpdsreach}, computing $\nu$ takes $\bigO(|\pdsstates_2|  |\Delta_2|^2  div(\pds_2))$ basic operations. Here, a basic operation potentially involves manipulating a signature for $u_2$ or manipulating a \emph{single record} of $u_1$'s signature, which is bounded by $(b_1b_2d_2)^{\bigO(1)}.$ Note that at no point do we incur a dependence on $d_1.$ By substitution and Proposition~\ref{prop:div}, we get the claimed running time of $2^{\bigO(b_2d_2+b_1)}  (n_1n_2)^3(n_1+n_2).$
\end{proof}

\subsection{Proof of Corollary~\ref{cor:notxp}}\label{sec:appsec-5}

\cornotxp*

\begin{proof}
  Suppose for the sake of a contradiction that \FMBR~parameterized by $d_i, b_{3-i}, d_{3-i}$ is in \XP. This implies that there is an algorithm for \FMBR~running in time $\bigO(n^{f(d_{i}, b_{3-i}, d_{3-i})})$ for some computable function $f.$ We can use this algorithm to solve \FMBR~restricted to inputs where $d_i=1$ and $b_{3-i} = d_{3-i} = 0.$ The algorithm will run in time $\bigO(n^{f(1, 0, 0)}).$ Since $f(1, 0, 0)$ is an absolute constant independent of $n$ and the parameters, this is a polynomial-time algorithm for the restricted version of \FMBR. However, by Theorem~\ref{thm:dbdhard}, this restricted version is \NP-hard, leading to a contradiction, unless $\PTIME= \NP.$
\end{proof}

\end{document}